\definecolor{blue}{RGB}{0, 93, 170}			
\newtheorem{observation}{Observation}
\def\Comments{0} 
\newcommand{\kibitz}[2]{\ifnum\Comments=1{\color{#1}{#2}}\fi}
\newcommand{\rmr}[1]{\kibitz{red}{[RESHEF:#1]}}
\newcommand{\full}[1]{#1}
\def\cite{\citealp}
\title{To Stand on the Shoulders of Giants:  Should We Protect Initial Discoveries in Multi-Agent Exploration? }
\author{Hodaya Lampert, Reshef Meir and Kinneret Teodorescu\\
Technion---Israel Institute of Technology
}
\begin{abstract}
     Exploring new ideas is a fundamental aspect of research and development (R\&D), which often occurs in competitive environments. Most ideas are subsequent, i.e. one idea today leads to more ideas tomorrow. According to one approach, the best way to encourage exploration is by granting protection on discoveries to the first innovator. Correspondingly, only the one who made the first discovery can use the new knowledge and benefit from subsequent discoveries, which in turn should increase the initial motivation to explore. An alternative approach to promote exploration favors the \emph{sharing of knowledge} from discoveries among researchers allowing explorers to use each others' discoveries to develop further knowledge, as in the open-source community. With no protection, all explorers have access to all existing discoveries and new directions are explored faster. 
     
     We present a game theoretic analysis of an abstract research-and-application game which clarifies the expected advantages and disadvantages of the two approaches under full information. We then compare the theoretical predictions with the observed behavior of actual players in the lab who operate under partial information conditions in both worlds.
     
     Our main experimental finding is that the no protection approach leads to \emph{more} investment efforts overall, in contrast to theoretical prediction and common economic wisdom, but in line with a familiar cognitive bias known as `underweighting of rare events'. 
     
         
\end{abstract}
\begin{document}


\pagestyle{fancy}
\fancyhead{}

\maketitle
\full{
\newpage
\section*{Declarations}
\subsection*{Funding}
This project was funded by the Israeli Ministry of Science and Technology (Grant no. 3-15284) and the Israel Science Foundation (ISF; Grant No. 2539/20).
\subsection*{Conflicts of interest/Competing interests}
Not applicable
\subsection*{Availability of code, data and material}
Data and code can be found in\\ \url{https://github.com/hodayal/}\\
\url{The-Effect-of-Protecting-Initial-Discoveries-on-Exploration}.
 \newpage
 }

\section{Introduction}

Stimulating innovation that leads to advances in technology has always been a core challenge for policy designers. To this end, some proponents advocate free competition while others argue for the benefits of rights protection. Although competition among agents usually has a positive influence on their incentives \citep{nickell1996competition,blundell1999market,younge2018competitive}, competition in the context of innovation might represent an exceptional case. This exception can be attributed to the cumulative nature of discoveries \citep{scotchmer1991standing}: when a new radical discovery is made it paves the path to a whole field of research possibilities. This new knowledge can be used to easily, and cheaply, make many more subsequent, though incremental, discoveries. Thus, if the new knowledge is publicly shared then other inventors can use it to generate incremental discoveries. However, only the first inventor bears the cost of the whole discovery process. Therefore, discoveries and inventions could, arguably, be classified as a public good, and as such might receive insufficient contributions (i.e. exploratory efforts) in a competitive environment \citep{grossman1990trade}.

Existing literature on multiagent search typically focuses on designing better agents that can both cooperate and compete~\citep{yokoo1996multiagent, wray2018integrated}. Alternatively, some papers design reward structures that can be better exploited by existing search algorithms~\citep{hester2010real,biswas2015truthful,jacq2022lazy}. 

In this work, we do not assume we have direct or even indirect control over the players (that may opaque algorithms, or humans, or firms), and would still like to incentivize them to explore better.

One common way to overcome the problem of insufficient exploration is to grant original inventors exclusive rights to explore related incremental discoveries. This type of protection aims at encouraging radical innovation efforts by blocking others from competing on subsequent developments, which in turn increases the potential rewards for original inventors \citep{kaufer2012economics}. \rmr{best if we can provide examples from software/technology} 
For example, technology firms protect their breakthrough discoveries with patents. Patents make it difficult  for other firms to use the protected knowledge and through this action, give the patentees a significant advantage in competition for subsequent products. In academic research, a policy that allows researchers to keep their data private, increases the reward from collecting the data, giving the researcher an advantage over other researchers from the same field (who cannot access and explore the data set).

\full{
Beyond increasing the incentives to search for radical innovations, another potential advantage of protecting initial discoveries is relative specialization. When a research team or a technology firm specializes in investigating one initial discovery, they can learn from their own previous experience which research directions work best and which will fail with high probability. Moreover, providing the original inventors exclusive rights for subsequent searches reduces the chances that the same discovery will be made by several teams in parallel, which in turn increases the overall efficiency of the search process \citep{baron2013cooperates,denicolo2000two}. Accordingly, blocking others from searching for subsequent discoveries may lead to more efficient exploration processes.  

Yet, blocking others from using existing knowledge reduces competition for subsequent discoveries which in fact might slow down the discovery process \citep{llanes2009anticommons,boldrin2005economics,bessen2009sequential,galasso2014patents}. For example, conferences and journals have recently started to condition publication of papers on making the data public \citep{stieglitz2020researchers,zhu2020open}, allowing other researchers to explore the data and extract novel insights as well as find errors in the original studies. This approach already underlies existing open-source platforms, where developers share their source code in a public domain and use other developers' code in their own programs \citep{lerner2006dynamics}. However, notice that open policies might suffer from the disadvantages protection is assumed to solve, i.e. insufficient investment in radical exploration and inefficient exploration processes. In the current work, we aim to shed light on the assumed advantages and disadvantages of protecting initial discoveries. Specifically, we examine how the fundamental economic factors underlying competition with and without a protection policy affect exploration behaviors and performance measurements such as the amount of discoveries made, the speed of making discoveries and the efficiency of the exploration process. 
}

Competitive search for discoveries occurs in many real-life environments, all share the fundamental economic features that are involved in exploration processes. One such prevalent feature is searching costs. 
Search for natural resources, for new innovation or for academic knowledge is costly in terms of both time and money. This cost is heterogeneous among competitors  and also varies over time, due to dynamic environmental factors (e.g. weather, energy price, employee availability). 
\full{Additionally, in many searching processes, successes are public information, while failures are private. Firms and researchers tend to publish their achievements to increase their reputation, their value in the stock market or their profits.\footnote{In some cases firms prefer to keep their successes as trade secrets. This option can be available when granting a patent is too expansive, or legally impossible. However, in many cases the possibility of reverse engineering the final product reduces the effectiveness of this choice.} However, failures such as unsuccessful attempts to find gas or oil, wrong research directions, failed experiments or disappointing development endeavors often remain private information that is kept far from the competitors eyes. 
}
\rmr{I think we can cut short some of this. We mainly need to say why we need both theoretical analysis (for generality) and experiments (because we suspect the way people treat small probabilities)}
Importantly, another shared attribute of competitive innovation environments is the distribution of rewards over the different types of discoveries: initial exploration in unfamiliar areas is less likely to succeed but offers higher rewards for radical discoveries, while subsequent, incremental discoveries are more frequent and yield lower rewards.

The interplay between the magnitude and the frequency of rewards and its effect on behavior cannot be captured by a model focusing on expected utility, but has been extensively explored in the Decisions from Experience (DfE) literature. One of the most robust findings in this literature is that in repeated choice settings, people tend to underweight rare events \citep{barron2003small,hertwig2004decisions,teodorescu2021enforcement}. Specifically, in exploration tasks, participants were found to under-explore in a ``rare treasure environment", where exploration is disappointing most of the time but on rare occasions can yield very high reward (discovery) \citep{teodorescu2014decision,teodorescu2014learned}. 

In the context of innovation, since initial discoveries can be thought of as rare treasures, invested exploration efforts may be below optimum. Moreover, since protecting initial discoveries reduces the probability for others to make subsequent discoveries, it decreases the average probability to make a subsequent discovery \citep{bessen2009sequential}. Thus, underweighting of rare events implies that increasing the magnitude of a rare reward (via protection) will have a smaller than expected effect on exploratory efforts to find initial discoveries. Underweighting of rare events also implies that when exploration is frequently rewarding (i.e. in searches for incremental discoveries) disappointing exploration efforts are more rare and thus people might over-search for subsequent discoveries (searching even when it is not optimal to do so).
\\
\paragraph{Previous experimental studies}
Only a few experimental studies tackled the effect of discovery protection on innovative behavior. \citet{torrance2009patents} used an interactive R\&D simulation, finding that protection reduced both the quantity and quality of innovations, and decreased welfare compared to a no-protection condition. Similarly, \citet{bruggemann2016intellectual} using a Scrabble like creativity task, found that protection reduced innovations' quantity and quality and also reduced welfare. However, \citet{buchanan2014experiment},  using a color generation studio task and \citet{dimmig2012quasi}, using a two-player duopoly game, found no significant or only minor effects of discovery protection. \citet{ullberg2012dynamic,ullberg2017coordination} further highlighted that low patent validity impairs coordination in a licenses market. Importantly, the limited number of experimental studies in competitive environment have employed highly complex tasks, which may increase external validity, but make causal relationships difficult to analyze. For example,  \citet{torrance2009patents} complex simulation does not clarify whether the adverse effect of protection stemmed from patenting cost, probability of making a discovery, licensing availability/fees, or other factors, nor whether participants' behaviors were rational response or influenced by behavioral biases. Additionally, probably due to the complexity, most of the above experiments lasted more than an hour yet included a relatively small amount of trials (10-25 per session). Since participants receive feedback only at the end of each trial, the limited number of trials makes it difficult to address learning and long-term effects.\footnote{\citet{torrance2009patents} are an exception, not employing a distinct-trials setting but rather using a fixed time limit of 25 min.}


\paragraph{The current framework}
In the current study we investigate exploration with and without protection over many trials and with immediate feedback. We aim to shed light on the fundamental causal effect of initial discovery protection on exploration, learning and performance within a competitive sequential environment. To this end, we developed a simplified game in which players compete to find hidden treasures on a spatial map. The competition is sequential, such that exploration decisions are based on existing knowledge that was discovered in previous periods. In this framework, treasures represent successful innovation efforts, i.e. making a new discovery is simulated by finding a treasure. The game is played under two conditions, ``Protection" and ``No Protection". Under the ``Protection" condition, the information gained from a treasure discovery can be used exclusively by the finder,\footnote{Hence, the protection here means that the finder can exclusively enjoy incremental improvement of initial discoveries.} and in the ``No Protection" condition, players can use the location of any treasure to guide the search for subsequent treasures. In addition, in both conditions the players failed exploratory efforts are private information, while their successes are public information. Within this simplified framework, we focus on investigating the effect of protection on exploration for initial and subsequent discoveries as well as on exploration efficiency. 

Unlike some of the previous experimental studies, here we do not focus on the innovation process itself (which involves creativity and entrepreneurship abilities, as in \citet{bruggemann2016intellectual}) but rather on the more basic economic variables such as search costs, the probability to make a discovery, the magnitude of reward obtained following discoveries etc. Importantly, the current setting also allows derivation of proxies to the optimal strategies with and without protection and the comparison of these proxies with actual behavior. Optimal strategy analysis assumes players act rationally and base decisions on full information regarding their payoffs structure. However, given the uncertain nature of innovative activity and evidence for bounded rationality, deviations from optimality might occur, as will be discussed below. Importantly, the current, simple, setting enables identification of systematic behavioral deviations from optimality under full information assumptions, which could be crucial in deriving efficient and ecologically valid policy implementations. 

The rest of the paper proceeds as follows: In Section~\ref{sec:theory} we put forward and analyze an abstract theoretical model of sequential discoveries with and without protection, confirming our main hypothesis that protection encourages initial discoveries but inhibits followup discoveries. In Section~\ref{sec:golddigger} we present a concrete game that simulates such an environment and compare theoretical with behavioral predictions. 
Our main contribution is a large lab experiment in Section \ref{sec:experiment}, in which the theoretical and behavioral predictions were tested, showing some expected and some surprising results. Section \ref{sec:conclusion} summarizes the main results and discusses theoretical and practical implications.

\section{A Theoretical Model for Competitive Exploration}\label{sec:theory}
In this section we put forward an abstract model that allows theoretical analysis. 
\\
There are $n$ players, each of which chooses how much to invest in exploration for novel knowledge (or \emph{research}), and how much to invest in \emph{exploitation} of existing knowledge, that may lean to \emph{application}. 

The strategy of each agent is thus composed of two real numbers, $r_i,x_i\geq 0$, representing the effort  $i$ invests in research and in exploitation of knowledge provided the opportunity, respectively. \rmr{bad terminology. $r_i$ is indeed in terms of effort or work. But $x_i$ is the rate at which $i$ applies available knowledge is consumed/applied}

 We call the aggregated research product \emph{knowledge}, $K:=\sum_{i=1}^n r_i$, which can in turn be exploited for applications. As $x_i$ is the effort $i$ invests in applying knowledge,  the overall \emph{work} $i$ invests in exploiting knowledge is $w_i:=K\cdot x_i$. 

The amount of knowledge $i$ actually applies depends not only on her own exploitation efforts, but also on others', as only one agent can profit from each application. We assume \emph{application} profit $a_i$ is proportional to the exploitation effort and to the total knowledge, so that $a_i:= \frac{x_i}{\sum_j x_j} K$.

\paragraph{Costs and utilities}
Both research and application carry direct benefit to the agent, as well as costs. 

For ease of exposition and consistency with the game we design later, We will associate a fixed reward $R_r, R_a\geq 0$ with each achievement, as well as a single convex \rmr{not sure convexity is needed except for uniqueness. We do need though that the derivative at 0 will be lower than the reward as otherwise no research is performed. } cost function $c:\mathbb R_+ \rightarrow \mathbb R_+$. Convexity of the cost function is due to the decreasing marginal gains of work invested.

We further assume that ceteris paribus, exploitation is more rewarding than exploration per invested effort, and hence $R_a \geq R_r$.

\begin{itemize}
    \item The total knowledge generated is $K:=\sum_i r_i$;
    \item The exploitation work of $i$ is $w_i:=x_i\cdot K$;
    \item The knowledge applied by $i$ is
    $$a_i:=\frac{x_i}{\sum_j x_i}K,$$
    or just $a_i = x_i K$ if there is no competition (i.e. if $\sum_j x_j < K$);
    \item the overall utility of $i$ is
    $$u_i(r,x):=r_i R_r + a_i R_a - c(r_i) - c(w_i).$$
\end{itemize}

    
\paragraph{Protected research}
When initial research is protected (e.g. by patents), there is no interaction between players. In our model, this essentially means that for each player $i$, $K=r_i$. We also replace the index with $0$ to denote it is a single player game. The optimal strategy then becomes a simple optimization problem.

\begin{proposition}
    The optimal strategy in the protected condition is to play $x^*_0 = 1$, and  $r^*_0$ is the unique $r$ s.t. $c'(r)=\frac{R_r+R_a}{2}$.
\end{proposition}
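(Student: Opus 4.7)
The plan is to substitute the single-player simplification $K = r_0$ into the utility expression, reduce the problem to an unconstrained maximization over the two nonnegative variables $(r_0, x_0)$, and pin down the maximizer in two stages: first fixing $r_0$ and solving for the best $x_0$, then optimizing over $r_0$.

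First I would split the analysis according to whether the ``no competition'' clause in the definition of $a_0$ is active. Under the protected condition with only player $0$, $\sum_j x_j = x_0$, so that clause makes $a_0 = \min(x_0, 1) \cdot r_0$, while the work $w_0 = x_0 r_0$ stays the same in both regimes. The utility therefore admits the piecewise form
\[
u_0(r_0, x_0) = \begin{cases} r_0 R_r + x_0 r_0 R_a - c(r_0) - c(x_0 r_0), & x_0 \le 1, \\ r_0 (R_r + R_a) - c(r_0) - c(x_0 r_0), & x_0 \ge 1, \end{cases}
\]
with the two branches agreeing at $x_0 = 1$.

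Next, fixing $r_0 > 0$, I would argue that $x_0^\ast = 1$. On $[1,\infty)$ the utility is strictly decreasing in $x_0$ since $c$ is nondecreasing with positive slope on positive inputs. On $[0,1]$ the partial derivative equals $r_0(R_a - c'(x_0 r_0))$, which is nonnegative at $x_0 = 1$ exactly when $c'(r_0) \le R_a$. Since the $r_0^\ast$ produced by the next step satisfies $c'(r_0^\ast) = (R_r + R_a)/2 \le R_a$, this consistency check holds and the joint maximizer indeed lies on the seam $x_0^\ast = 1$.

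Finally, plugging $x_0 = 1$ back in collapses the utility to $u_0(r_0, 1) = r_0(R_r + R_a) - 2 c(r_0)$, whose first-order condition gives $c'(r_0^\ast) = (R_r + R_a)/2$. Strict convexity of $c$, together with the tacit assumption that $c'(0)$ lies below this threshold, yields existence and uniqueness of $r_0^\ast$. The main subtle point is that the optimum sits exactly at the boundary between the two pieces of $a_0$, so one must avoid prematurely treating either piece as the interior; once that is handled, the rest is routine single-variable convex optimization.
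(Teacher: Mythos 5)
Your branch structure matches the paper's: you read $a_0=\min(x_0,1)\,r_0$, dispose of $x_0>1$ by noting the utility is strictly decreasing there (the paper's ``dominated'' argument), and collapse to $u_0(r_0,1)=r_0(R_r+R_a)-2c(r_0)$ at the end. Where you genuinely diverge is the region $x_0<1$: the paper supposes an interior optimum there and derives a contradiction from the pair of first-order conditions $c'(r_0x_0)=R_a$ and the $r_0$-FOC (its printed algebra actually drops a $c'(r_0)$ term, but the corrected conclusion $c'(r_0)=R_r$, $c'(r_0x_0)=R_a$ still clashes with convexity and $R_a\ge R_r$), whereas you argue by monotonicity of the best response: for fixed $r_0$, $\partial u_0/\partial x_0=r_0\left(R_a-c'(x_0r_0)\right)\ge 0$ on $[0,1]$ whenever $c'(r_0)\le R_a$, pushing the inner optimum to the seam. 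Your route is more constructive, avoids the paper's slip, and makes explicit where the standing assumption $R_a\ge R_r$ enters (via $\frac{R_r+R_a}{2}\le R_a$).

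There is, however, one gap in the step ``the joint maximizer indeed lies on the seam.'' Your consistency check only certifies that $(r_0^*,1)$ is a critical point of the problem restricted to $\{r_0: c'(r_0)\le R_a\}$; it does not exclude a global maximum at some $(r_0,x_0)$ with $c'(r_0)>R_a$, where the inner maximization over $x_0\in[0,1]$ genuinely returns an interior point with $c'(x_0r_0)=R_a$. Since $u_0$ is not jointly concave in $(r_0,x_0)$ (the bilinear term $x_0r_0R_a$ and the composite $-c(x_0r_0)$ spoil it), a stationary point on the seam does not automatically dominate that region. The fix is one line: for $c'(r_0)>R_a$ the value function is $V(r_0)=r_0R_r-c(r_0)+wR_a-c(w)$ with $w$ the fixed solution of $c'(w)=R_a$, so $V'(r_0)=R_r-c'(r_0)<R_r-R_a\le 0$; hence $V$ is strictly decreasing there and its supremum sits on the boundary $c'(r_0)=R_a$, which your seam analysis already covers. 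With that line added, your argument is complete and, in my view, tighter than the paper's.
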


\begin{proof}
If $x_0>1$ then $w_0 > K$, and $a_i=\frac{x_0}{x_0}K=K$. So the agent pays $c(w_0)>c(K)$ without getting any additional benefit beyond $R_a\cdot K$. Thus $x_0>1$ is dominated.

If $x_0<1$ then $a_0=x_0 K = x_0 r_0$, and  
$$u_i=r_0 R_r + a_0 R_a- c(r_0) - c(w_0)= r_0 R_r + r_0 x_0 R_a- c(r_0) - c( r_0 x_0).$$

 We consider both partial derivatives of $u_0$:
\begin{align*}
    \frac{\partial u_0}{\partial r_0} & = R_r+x_0 R_a - c'(r_0)-x_0 c'(r_0 x_0)\\
    \frac{\partial u_0}{\partial x_0} & = r_0 R_a - r_0 c'(r_0 x_0) \tag{since $a_0=x_0 r_0$}\\
\end{align*}
If the strategy is optimal, then both derivatives are 0. However this would mean
$$x_0 \cdot c'(r_0 x_0)= R_r + x_0 R_a -c'(r_0) ; \text{ and }  c'(r_0 x_0)=R_a,$$
and thus
$$x_0 R_a = R_r + x_0 \cdot R_a,$$
 which a contradiction since $R_r>0$.

The strategy of the player therefore reduces to a single variable $r_0$, and the utility can be re-written as $u_0(r_0)=r_0(R_r+R_a) - 2c(r_0)$.
By derivation, we get 
that $r^*_0$ is the unique point where $c'(r) = \frac{R_r+R_a}{2}$.
\end{proof}

\paragraph{No protection}
When there are multiple players with access to the generated knowledge, we have that $K=\sum_i r_i$, and the applications $a_i$ each agent generates depend both on $K$ and the exploitation strategies $x_1,\ldots,x_n$, as explained above. 

\begin{observation}
    In every equilibrium, knowledge is fully exploited. I.e. $\sum_i x_i\geq K$.
\end{observation}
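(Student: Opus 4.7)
The plan is to argue by contradiction. Suppose there were an equilibrium $(r,x)$ lying strictly inside the no-competition regime, so that for every $i$ the application reduces to $a_i = x_i K$ and the utility becomes
$$u_i(r_i,x_i) = r_i R_r + x_i K R_a - c(r_i) - c(x_i K),$$
with $K = \sum_j r_j$. I would then pin down each player's best response from the first-order conditions and re-aggregate to contradict the hypothesis.

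First I would compute $\partial u_i/\partial x_i = K\bigl(R_a - c'(x_i K)\bigr)$. The modelling assumption $c'(0) < R_r \leq R_a$ rules out corner solutions at $r_i = 0$ or $x_i = 0$ and guarantees $K > 0$, so setting this interior first-order condition to zero would pin down $x_i K = y^*$ uniformly across $i$, where $y^*$ is the unique root of $c' = R_a$. Substituting $c'(x_i K) = R_a$ back into
$$\frac{\partial u_i}{\partial r_i} = R_r + x_i R_a - c'(r_i) - x_i c'(x_i K) = 0$$
would cancel the $x_i R_a$ terms and leave $c'(r_i) = R_r$, so every $r_i$ would equal the unique $r^{**}$ with $c'(r^{**}) = R_r$. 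The putative equilibrium would therefore be fully symmetric with $K = n r^{**}$ and $w_i = y^*$ for every $i$.

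To close the contradiction I would invoke $R_a \geq R_r$ together with convexity of $c$, which force $y^* \geq r^{**}$ and hence $\sum_i w_i = n y^* \geq n r^{**} = K$ — that is, aggregate work already meets aggregate knowledge, contradicting the no-competition premise that $\sum_i w_i < K$. The main obstacle I expect to navigate is the seam between the two payoff regimes: one must verify that the perturbations implicit in the first-order argument remain on the no-competition side, and that an equilibrium sitting exactly at the threshold is not spuriously ruled out. Both should be handled cleanly because the two definitions of $a_i$ agree at the boundary, so the one-sided marginal calculations match on either side and the weak inequality in the observation correctly admits the boundary case.
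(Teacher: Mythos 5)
Your argument is correct and is essentially the paper's own proof written out in full: the paper's one-line appeal to ``a contradiction as in the singleton case'' is precisely your first-order-condition computation ($c'(x_iK)=R_a$ from the $x_i$-derivative, hence $c'(r_i)=R_r$ from the $r_i$-derivative, hence $w_i\geq r_i$ by $R_a\geq R_r$ and monotonicity of $c'$), the only cosmetic difference being that the paper applies it to a single agent $j$ with $a_j<r_j$ while you symmetrize and aggregate over all agents. Note also that you have (correctly) read the claim as $\sum_i w_i\geq K$, i.e.\ $\sum_i x_i\geq 1$; the printed inequality $\sum_i x_i\geq K$ is a units slip in the paper, as one checks against its own ``no competition'' condition and against $x_0^*=1$ in the singleton case.
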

Otherwise, there is an agent with $a_j=x_j<r_j$, and we get a contradiction as in the singleton case.

\begin{proposition}
    There is a symmetric equilibrium, where for every agent $i$, 
    \begin{enumerate}
        \item  $c'(r^*_i)=R_r+\frac{R_a}{n^2}$; and 
        \item $x^*_i c'(n\cdot r^*_i x^*_i)= \frac{n-1}{n^2}R_a$.
    \end{enumerate}
\end{proposition}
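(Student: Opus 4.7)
The plan is to posit a symmetric profile $(r^*,x^*)$ for all $n$ players, compute the relevant quantities at this profile, take the first-order conditions of a single deviator's utility, and verify that they reduce to the two equations in the statement.

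At a symmetric profile we have $K=nr^*$, $\sum_j x_j = n x^*$, and therefore $a_i = r^*$ and $w_i = n r^* x^*$. To take the FOCs, I would fix all opponents at $(r^*,x^*)$ and treat $r_i,x_i$ as free. The key partial derivatives are
\begin{align*}
\frac{\partial K}{\partial r_i} &= 1, &
\frac{\partial a_i}{\partial r_i}\Big|_{\text{sym}} &= \frac{x^*}{nx^*} = \frac{1}{n}, \\
\frac{\partial a_i}{\partial x_i}\Big|_{\text{sym}} &= \frac{\sum_{j\ne i} x_j}{(\sum_j x_j)^2}K\Big|_{\text{sym}} = \frac{(n-1)r^*}{nx^*}, &
\frac{\partial w_i}{\partial r_i}\Big|_{\text{sym}} &= x^*, \quad \frac{\partial w_i}{\partial x_i}\Big|_{\text{sym}} = nr^*.
\end{align*}
Substituting these into $\partial u_i/\partial x_i = R_a\,\partial a_i/\partial x_i - c'(w_i)\,\partial w_i/\partial x_i = 0$ gives, after dividing by $nr^*$, exactly condition (2): $x^*\,c'(nr^*x^*) = \frac{n-1}{n^2}R_a$. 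Plugging this back into $\partial u_i/\partial r_i = R_r + R_a/n - c'(r^*) - x^*\,c'(nr^*x^*) = 0$ and simplifying yields condition (1): $c'(r^*) = R_r + R_a/n^2$.

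For existence, I would argue as follows. Since $c$ is convex, $c'$ is non-decreasing; assuming $c'$ is continuous and unbounded with $c'(0)$ sufficiently small (the assumption already flagged in the paper), the equation $c'(r) = R_r + R_a/n^2$ has a solution $r^*>0$. Given this $r^*$, define $\varphi(x):=x\,c'(nr^*x)$; since $\varphi(0)=0$ and $\varphi(x)\to\infty$ as $x\to\infty$, continuity yields some $x^*>0$ with $\varphi(x^*)=(n-1)R_a/n^2$. The Observation preceding the proposition guarantees we are in the regime $\sum_j x_j \geq K$, i.e.\ $x^*\geq r^*$, so the formula $a_i=x_i K/\sum_j x_j$ we used is the correct one; this should be checked directly from the two equations.

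The main obstacle I anticipate is not the algebra but verifying that the critical point is actually a best response, i.e.\ a (global) maximizer of $u_i(r_i,x_i)$ given opponents at $(r^*,x^*)$. The utility is a sum of linear terms and of $-c(r_i)-c(x_i(r_i+\sum_{j\ne i}r_j))$, and since $c$ is convex these cost terms are convex in $(r_i,x_i)$, which is the right sign for concavity of $u_i$; however the term $a_i R_a = R_a x_i K/(x_i + \sum_{j\ne i}x_j)$ is not jointly concave in $(r_i,x_i)$ in general, so some care is needed. I would handle this by fixing $x_i$, optimizing over $r_i$ (where the objective is concave), and then reducing to a one-dimensional optimization over $x_i$, checking the sign of the second derivative along this reduced problem to conclude that the interior critical point identified above is the unique maximizer.
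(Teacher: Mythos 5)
Your first-order-condition computation is correct and is, in fact, more careful than the proof printed in the paper. Both arguments posit a symmetric profile and differentiate a single deviator's utility, and your exploitation FOC coincides with the paper's (dividing by $nr^*$ to get $x^*c'(nr^*x^*)=\tfrac{n-1}{n^2}R_a$). The substantive difference is in the exploration step: the paper fixes the $x_j$'s and then writes the exploitation cost as $c(x_i)$ rather than $c(Kx_i)$, thereby dropping the term $x^*c'(nr^*x^*)$ from $\partial u_i/\partial r_i$; its displayed conclusion is $c'(r^*)=R_r+\tfrac{R_a}{n}$, which does not match item (1) of the statement. You keep that cross term (via $\partial w_i/\partial r_i = x^*$) and substitute the exploitation FOC into the exploration FOC to get $c'(r^*)=R_r+\tfrac{R_a}{n}-\tfrac{n-1}{n^2}R_a=R_r+\tfrac{R_a}{n^2}$, exactly as claimed; this is the correct derivation (it also appears, commented out, in the paper's source, flagged by the authors as the fix to their own proof). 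Your remaining points --- existence of solutions to the two equations under the stated assumptions on $c'$, checking that the candidate lies in the regime where $a_i = x_iK/\sum_j x_j$ is the operative formula, and verifying that the interior critical point is a global best response even though $a_iR_a$ is not jointly concave in $(r_i,x_i)$ --- address genuine gaps that the paper leaves entirely untreated. The last of these is only sketched in your plan and would have to be carried out to make the equilibrium claim airtight, but the proposed reduction (optimize the concave problem in $r_i$ for each fixed $x_i$, then a one-dimensional second-order check in $x_i$) is a sound way to do it.
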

For a proof see Appendix~\ref{apx:proofs}.
\begin{corollary}
    The rate of exploration is \emph{higher with protection} as long as $\frac{R_a}{R_r}>\frac{n^2}{n^2-2}$; and the rate of exploiting available knowledge is \emph{lower with protection} as long as  $\frac{R_a}{R_r}>\frac{n^2}{n^2-n-1}$.
\end{corollary}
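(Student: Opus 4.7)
\emph{Proof plan.} I would attack the two inequalities separately, reducing each to a direct application of the strict monotonicity of $c'$ together with the first-order conditions of the two previous propositions.

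For the exploration comparison the argument is short: since $c$ is (strictly) convex, $c'$ is strictly increasing, so $r_0^* > r_i^*$ iff $c'(r_0^*) > c'(r_i^*)$. Plugging in $c'(r_0^*) = (R_r+R_a)/2$ and $c'(r_i^*) = R_r + R_a/n^2$ and clearing denominators reduces the inequality to $(n^2-2)\, R_a > n^2 R_r$, i.e.\ $R_a/R_r > n^2/(n^2-2)$.

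For the exploitation comparison I first have to pin down what ``rate of exploiting available knowledge'' means. I read it as the aggregate exploitation effort $\sum_i x_i$, which equals $x_0^*=1$ under protection and $n x_i^*$ in the symmetric no-protection equilibrium, so the claim becomes $n x_i^* > 1 \iff R_a/R_r > n^2/(n^2-n-1)$. The main obstacle is that $x_i^*$ is only implicitly defined through the second FOC $x_i^*\, c'(n r_i^* x_i^*) = \tfrac{n-1}{n^2} R_a$, with no general closed form in terms of $R_r,R_a,n$. My plan is to bypass this by introducing $\phi(x) := x \cdot c'(n r_i^* x)$, which is strictly increasing on $x>0$ (product of a strictly increasing positive factor with a non-decreasing positive one). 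The second FOC then reads $\phi(x_i^*) = \tfrac{n-1}{n^2} R_a$, so $x_i^* > 1/n$ iff $\phi(x_i^*) > \phi(1/n)$. The point of testing $\phi$ at the threshold $1/n$ is that $\phi(1/n) = \tfrac{1}{n}\, c'(r_i^*)$, and the first FOC pins down $c'(r_i^*) = R_r + R_a/n^2$. Substituting and clearing denominators yields exactly $(n^2-n-1)R_a > n^2 R_r$.

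The delicate step is thus the choice of the ``test input'' $1/n$ for $\phi$: it is precisely this value that lets the second FOC feed back into the first and eliminates all dependence on the unspecified cost function $c$, after which the corollary collapses to elementary algebra.
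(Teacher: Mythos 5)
Your proposal is correct and follows essentially the same route as the paper: both parts reduce to the strict monotonicity of $c'$ applied to the first-order conditions, and your key step of evaluating $\phi(x)=x\,c'(n r_i^* x)$ at the test point $1/n$ so that the first FOC pins down $c'(r_i^*)$ is exactly the paper's trick (which it phrases as a proof by contradiction from $x_i^*<\tfrac1n$ rather than as an iff via a monotone auxiliary function). Your direct, two-sided version is if anything slightly cleaner, and your algebra in both comparisons checks out.
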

Note that the condition on $\frac{R_a}{R_r}$ becomes trivial for large $n$. 
\begin{proof}
    For initial search the rate of exploration is just $r$.  Note that since $c$ is convex, $c'$ in increasing and thus $r^*_i > r^*_0$ iff $c'(r^*_i) > c'(r^*_0)$, which means $$R_r+\frac1{n^2}R_a > \frac{R_r+R_a}{2} \iff \frac{R_a}{R_r}>\frac{n^2}{n^2-2}.$$

    For sequential search, note first that the rate at which knowledge is consumed under protection is $x^*_0=1$. Without protection, there is one pool of knowledge of size $K$, which is consumed at rate $\sum_{i=j}^n x^*_j$, i.e. $nx^*_i$ in a symmetric equilibrium. We argue that $x^*_i > \frac1n$ (under the premise assumption on $R_a,R_r$). 

    Indeed, assume towards a contradiction that $x^*_i < \frac1n$. Then due to $c'$ being an increasing function,  
    \begin{align*}
        \frac{n-1}{n^2}R_a &= x^*_i c'(n\cdot r^*_i x^*_i) < \frac1n c'(r^*_i)\\
        &= \frac1n c'( (c')^{-1}(R_r+\frac{1}{n^2}R_a)) = \frac1n (R_r+\frac{1}{n^2}R_a) &\iff \\
        (n^2-n)R_a &< n^2R_r +  R_a &\iff \\
        \frac{R_a}{R_r} &< \frac{n^2}{n^2-n-1}, 
    \end{align*}
    in contrast to out premise assumption. 
\end{proof}

In fact, for polynomial costs we can get an approximate estimate of the actual effort invested in sequential search. Again the proof is in Appendix~\ref{apx:proofs}.
\begin{proposition}
    Suppose that $c(x) = \alpha \cdot x^\beta$.\\
    Then $x^*_i=\frac1n (\frac{R_a}{R_r})^{\frac{1}{\beta}}+\Theta\left(\frac{1}{n^{1+\frac{1}{\beta}}}\right)$.
\end{proposition}
For large $n$, the low order term can be neglected, and we get that the overall rate in which the generated knowledge is exploited is $\sum_j x_j\approx (\frac{R_a}{R_r})^{\frac{1}{\beta}}>1$, i.e. faster than it is under protection. Interestingly, the rate asymptotically depends only on the ratio $\frac{R_a}{R_r}$ and not on the number of the competing agents. 

\section{The Competitive Treasure Hunt Game}\label{sec:golddigger}

``The Competitive Treasure Hunt" game is played in groups of $n=4$ players. In this game, players are faced with a hive of white hexagons and need to find treasures. 5\% of the hexagons are hidden treasures that simulate discoveries in the real world. 

 Every three treasures are arranged in clusters  which form a tight triangle. We define the three linked treasures as a ``gold mine." Therefore, discovering one treasure increases the probability of finding the second treasure in the mine from (roughly) 0.05 to at least 0.33. 
 The value of the first treasure in the cluster is set to 320, so the expected reward of every `research' action is $R_r = 0.05\cdot 320=16$.  The value of subsequent treasures is only 80, so we can think of the expected reward as (at least) $R_a = 0.33\cdot 80 \approx 26.6$, and in particular higher than the reward for initial research.

 The first treasure to be found in each mine simulates a breakthrough discovery and the other two treasures simulate sequential discoveries. While finding an initial innovation is rarer, it provides knowledge that increases the probability of sequential innovations, or in our game, subsequent treasure discoveries. 
 

 The costs of exploration for each round are uniformly distributed  over $\{5,10,15,20,25,30,35\}$, and are sampled independently for every player in each round. Each player is informed of his current cost of exploration at the beginning of each round.\footnote{The variation in search costs is intended to create heterogeneity between the players that also exists in the real world, where sometimes certain players have more skill (or knowledge, or resources) that allows them lower cost compared to others.} The players choose simultaneously whether to explore or to skip the round. Players who decide to skip the round obtain 0, and players who decide to explore, get to search one of the hexagons in the hive. They must pay the costs of exploration, and their total payoff in the round depends on whether they find a treasure or not, under which condition they play, and the decisions of the other players.\footnote{The reason we chose this payoffs and cost structures is because we designed the optimal search cost threshold strategies to be roughly in the middle of the cost range, to reduce ceiling or floor effects. The calculation of the optimal strategies can be found in the chapter of the theoretical analysis.}
 Mapping costs to our theoretical model, we get that $c$ is roughly quadratic. To see why, suppose that search costs were uniform in $[0,35]$ rather than discrete, then an agent searching whenever the cost is under some threshold $t$ would end up paying $\int_{\ell=0}^t t dt =\frac{t^2}{2}$.

After clicking on a hexagon, if a player does not find a treasure, the hexagon he choose is colored in black on his board, but not on the other players' boards. If a player finds a treasure, the hexagon is colored in yellow on his board, and in red on the other players' boards (thus treasures are public, but failed exploration efforts are not).  

After Once a hexagon is colored in any color, the player cannot choose this hexagon in future rounds of that game. The mines are not adjacent to each other. Also, the treasure map was built so that all the mines contained exactly 3 treasures.\footnote{Regarding the edges, the proportion between treasures and empty hexagons approximately remains, so that the probabilities to find a treasure were not affected by the mine's location.}   
 The game is played 4 times with 50 rounds each. The objective of the game is to maximize the expected payoff in each round.

The game is played under two conditions: ``Protection" and ``No Protection". 
\paragraph{Protection condition}
Under the ``Protection" condition, whenever a player finds the first treasure in a new mine, he also obtains the exclusive right to explore the adjacent hexagons (note that this area covers the entire gold mine). The protected area is marked on the board for all players, and   the marking is removed once the entire mine was discovered (see Fig. 1). Hence, no other player can profit from the information revealed after finding the first treasure in a new mine, since collecting the payoff from the two other treasures is not possible. 

\rmr{move to appendix: In addition, when two or more players find the same first treasure simultaneously, they all pay the costs of exploration, and the computer randomly chooses one of them to receive the gains of the payoff from the treasure and the protection of the mine, while the others obtain 0 and cannot profit from subsequent discoveries in that mine. 
}

When a protected treasure is discovered, the protection allows the player to profit exclusively from all hexes adjacent to the treasure. A protection boundary is created that signals to the player with the protection and to the other players that there is an active protection. The protection boundary continues to be marked until all the treasures in the mine have been discovered. 

\paragraph{No Protection condition}
Under the ``No Protection" condition, when a player finds a treasure, this does not restrict the future search of other players. 

After choosing a hexagon, it is colored as in the case of the Protection condition.

See Figures~\ref{fig:screenshotpatent} and \ref{fig:screenshotnopatent} for screenshot examples. E.g. in Fig.\ref{fig:screenshotpatent}  we can see some failed searches, one mine that was fully discovered by the current player, and two mines that are partially discovered: one protected by the current player (with a single discovered treasure); and one protected by another player (with two treasures discovered out of three).  In Fig. 2 we can see two fully discovered mines, where the current player managed to obtain some of the profit.\footnote{Examples of screenshots of typical end games in both conditions are presented on Appendix~\ref{apdx:screenshots}}

\begin{figure}[t]
\includegraphics[width=0.45\textwidth]{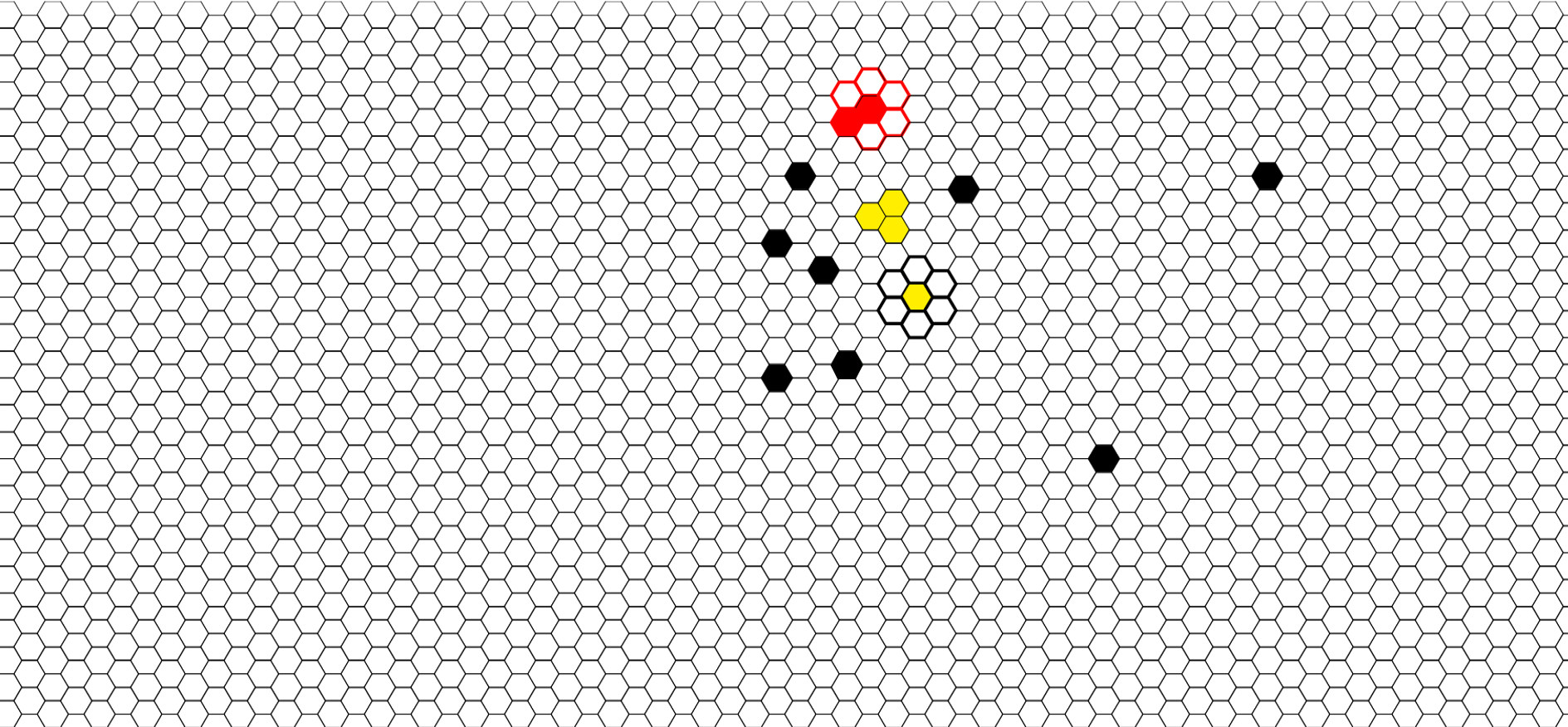}
\centering
\caption{A screenshot of the game, the Protection condition. Black hexagons represent failed searches, red hexagons are treasures that were found by other players and yellow hexagons are hexagons that were found by the player himself.}
\label{fig:screenshotpatent}
\end{figure}
 \begin{figure}[t]
\includegraphics[width=0.45\textwidth]{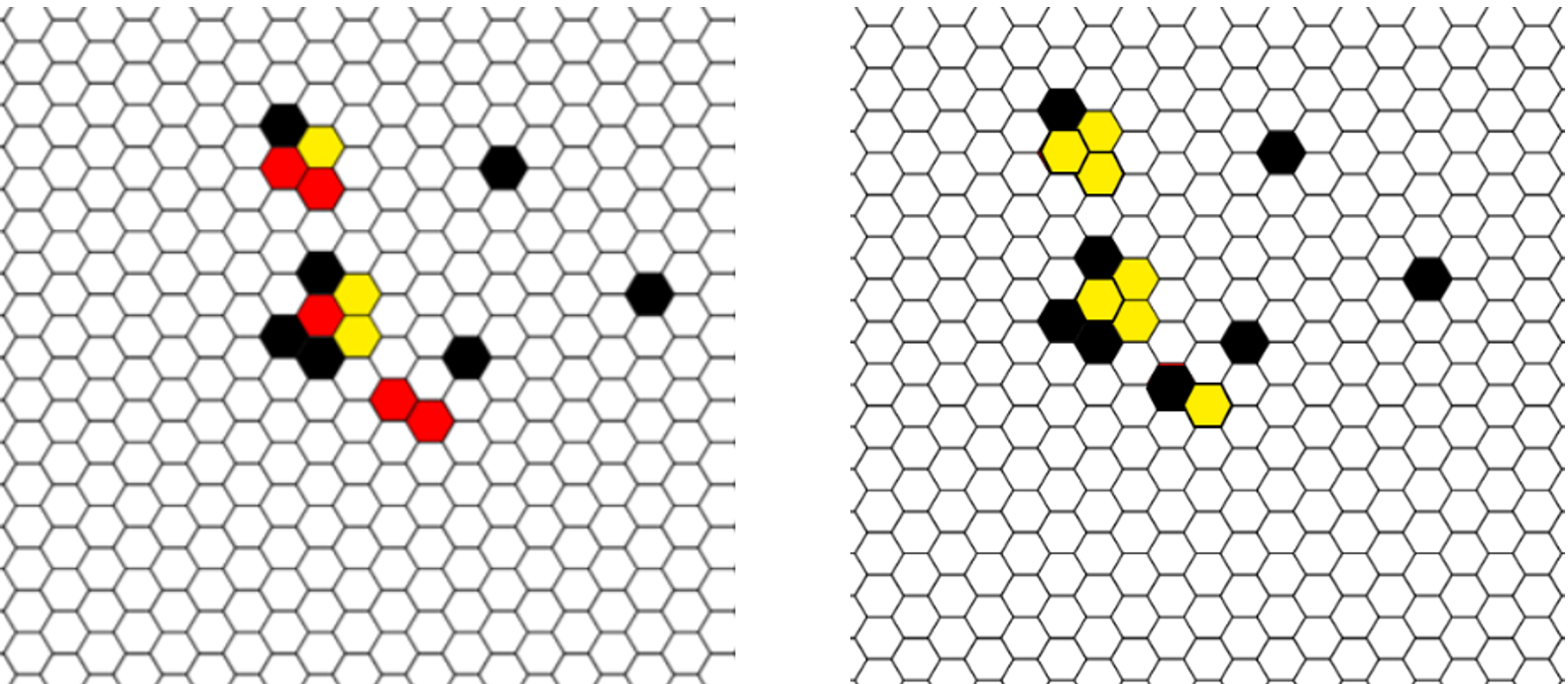}
\centering
\caption{Left: Part of a screenshot in the No Protection condition. Notice that in this condition, there are no protected areas thus each mine can be discovered by more than one player. Right: Part of a screenshot in the Singleton condition.}
\label{fig:screenshotnopatent}
\end{figure}

In addition, we refer to a control ``Singleton"  condition.\footnote{A similar approach was taken by~\citet{levy2018understanding} in a different setting where the researchers study the effect of competition on the players' behavior in simple contests.}


Under the ``Singleton" condition, each player plays as a singleton player, completely unaffected by other players  (his payoff and his board is independent of the other players' choices).  
The players can only observe their own treasures (colored yellow) and failed exploratory efforts (colored black).

\subsection{Simulation Results}\label{sec:simulations} 
We programmed artificial Fully Informed Baysian Players (FIBP) \rmr{what is FIBP?} in both conditions, and let them play ``The Competitive Treasure Hunt" game, in order to  provide a theoretical prediction regarding the players' performance in the game. A player is defined by a pair of thresholds: cost thresholds for exploration for first and for subsequent treasures. We focus on symmetric strategies, i.e. within each simulation, all players use the same combination of strategies. The number of treasures found and their payoffs were the outputs. We repeated the game 10,000 times for each possible combination of thresholds. 

The simulation results qualitatively confirm the results of the abstract model with the appropriate parameters set. In particular, from the results in the previous section:
\begin{itemize}
    \item $r_0^* = \frac{1}{4\alpha}(R_r+R_a) = \frac{21}{2\alpha}$;
    \item $r_i^* = \frac{1}{2\alpha}(R_r+\frac{1}{n^2}R_a) = \frac{17.6}{2\alpha}$;
    \end{itemize}
    so we would expect an increase of $\sim 20\%$ in initial search frequency when adding protection.  
    
    Likewise, since $\left(\frac{R_a}{R_r}\right)^\frac1\beta=\sqrt{\frac{26}{16.6}}\approx 1.25$, then for large $n$ we should expect an decrease of $\sim20\%$ in the rate of sequential search under protection,  although when considering the low order terms for $n=4$ we get a much smaller expected decrease of about $5\%$.

Indeed, in our simulations the optimal/equilibrium initial search threshold increases from 15 to 20 when applying protection, and sequential search threshold decreased from 25 to 20. We should note however that the simulation only used multiples of 5 so it is not very precise.  \rmr{If Hodaya still has the code I'm sure she can run it with higher threshold resolution}
 
See more details on the simulations in Appendix~\ref{apdx:simulation}. 

\full{The simulations show that under a rational behavior assumption FIBP find more treasures in the Protection condition, both at the group and at the individual levels. In addition, the simulations show that the number of treasures found by more than one player, which indicates the inefficiency of exploration, increases as the thresholds increase in the No Protection condition, since higher thresholds result in more search activity over a limited area. The results of this analysis provide the theoretical prediction that under profit maximizing assumption, exploration will be more efficient (less effort leads to more discoveries) under the Protection conditionizing assumption, exploration will be more efficient (less effort leads to more discoveries) under the Protection condition.
 }

\subsection{Theoretical Predictions}
Following the theoretical analysis and simulations, we get two clear theoretical predictions under profit maximization and full information assumptions:

\textit{\textbf{Theoretical Prediction~1:}} Under the Protection condition, initial and sequential search activities should be at a similar rate.

\textit{\textbf{Theoretical Prediction~2:}} Protection increases exploration activity for first treasures.

\textit{\textbf{Theoretical Prediction~3:}} Protection decreases exploration activity for subsequent treasures. \rmr{With the corrected analysis this seems to depend on the game parameters. What we can predict conclusively is that without protection we should expect (slightly) more effort on sequential search than for initial search. This is in agreement with BP~1. So perhaps we should name the theoretical predictions 1 to 1.1 and 3 to 1.2.}


Both predictions stem directly from the theoretical analysis and are supported by simulating rational behavior. 

\subsection{Behavioral predictions} \rmr{Is this the right place to discuss?}
It is important to note that the theoretical predictions were derived under the assumption of FIBP who know the a-priori probability of finding a treasure in a new mine and follow the optimal exploration threshold from start. In real life however (and also in our lab experiment), the a-priori probability of making a new discovery is unknown to the competing players in advance, and they can learn it only throughout ongoing experiences. Under such partial information conditions, it might take time  until rational players converge to a consistent exploration threshold. Importantly, assuming effective learning processes, the consistent exploration threshold which is eventually formed should still be close to the optimal one.

However if participants are evaluating the value of an action based on the \emph{likelihood} for profit more than on the \emph{magnitude} of profit, (in line with underweighting of rare events findings in DfE, e.g.,  \citet{barron2003small,hertwig2004decisions,camilleri2011and,erev2014maximization,plonsky2015reliance,teodorescu2021frequency}) this would alter our hypotheses. First, regardless of the condition (with/without protection), we would expect sequential search to be much more lucrative as it is much more likely to yield a reward. Second, while protection increases the magnitude of initial rewards, it does not affect the chance of success, and thus should not have a major effect on initial search behavior. This yields the following predictions:

\textit{\textbf{Behavioral Prediction~1:}} Sequential search activity should be higher than initial search activity, under both conditions.

\textit{\textbf{Behavioral Prediction~2:}} we should not expect a difference in initial search activity between the two conditions.

Note that each of these behavioral predictions~1,2 directly contradicts its theoretical counterpart, while theoretical prediction~3 is not affected by the above discussion.



\section{Experiment Design}\label{sec:experiment}
\paragraph{Participants}
We had 154 subjects divided into groups of 4. In total we had 15 groups in each condition, plus 34 subjects who played in the singleton condition. 
Subjects were payed a show-up fee plus a performance fee that could be either positive or negative.

\paragraph{Experimental Design}
Participants played a lab adaptation of the ``The Competitive Treasure Hunt" game that was described in Section~\ref{sec:golddigger}. Each player played four sessions (with the same group), where each session lasted 50 rounds. We excluded the last 12 rounds of every session from analysis to avoid endgame effects.

For each game, the computer chose randomly one of ten different possible ``maps" of treasures. 

Each player can observe the other players' successes but cannot observe their failures. 
The exploration costs were randomized between participants and between rounds. 

Players were not informed in advance of the probability to find a treasure, yet the number of rounds was sufficient to quickly learn the probabilities. Indeed, analysis of potential changes throughout the games revealed quick learning and no significant differences between the first and the last game participants played.


 See Appendix~\ref{apx:experiment} for further details on participants and experiment design. 
 
\paragraph{Game flow}
 At each trial's onset, the players is informed about the exploration cost for this round, and is asked whether he wants to skip the round or to explore under the current exploration cost.

If a player decided to skip, he gained 0, and the round was over for him. If a player decided to explore, he could choose one of the hexagons in the hive that was not yet colored. At the end of each round, the players received a message with their payoff from the round, calculated as the reward obtained minus the current exploration cost.   

\paragraph{Data Analysis}
We compare the rate of  `search' decisions between the three conditions and between initial and sequential search contexts. While it is straightforward to classify subjects to conditions, we should be more careful when determining the context. 

To avoid ambiguity and maintain consistency among conditions, we considered as `initial search' context all turns in which there were no partially-discovered mines (i.e. mines with one or two discovered hexagons) on the board. \rmr{please verify this!}

We considered as `sequential search' context for a player all turns immediately after discovering a mine. All other turns were excluded from analysis. 

We are performing three types of analysis:
\begin{itemize}
    \item For each such combination of condition and context, we consider the fraction of turns in which the agent chose to search, which we can plot on a curve.
    \item We use linear regression on each context separately, to test for statistical significance of the effect of condition (see details in Appendix~\ref{apx:stat}).
    \item We identify the cost threshold of individual participants, and compare the distribution of thresholds between conditions and between contexts.
\end{itemize}
The threshold analysis is more challenging as participants do not always make consistent decisions (see Appendix~\ref{apdx:actualthreshold} for details). Yet it has the added benefit that we can compare the numerical values we obtain to our theoretical predictions.

\section{Experimental Results}

\subsection{Initial vs. Sequential search}
Recall that Theoretical Prediction~1 suggested no difference between search frequency under the protection condition. Our empirical results show that the search rate for sequential discoveries was $0.72$ vs. $0.6$ for initial discoveries, i.e. an increase of 20\%.  This finding is consistent over all search costs. 

We can consider the same question by comparing participants' search thresholds. The median threshold was higher than the theoretically optimal threshold of 21 \rmr{fix when we have accurate simulations} in both contexts.  Yet the difference was slight for initial search (median threshold of about 22), and substantial for sequential search (median threshold of about 26).  Results in the singleton condition were very similar. 

We can therefore decisively reject Theoretical Prediction~1 in favor of the competing Behavioral prediction. 

\subsection{The effect of protection on initial search behavior}
Focusing on initial search, we do observe some difference in search rate between conditions. There is a decrease of $\sim 11\%$ (from $0.6$ to $0.53$). This decrease is consistent over search costs (see Fig.~\ref{fig:first}), but the effect of the condition is not statistically significant. 

Considering thresholds we get a similar picture: the median threshold is somewhat lower in the No Protection condition (19 vs. 22), yet  higher than the theoretical equilibrium threshold of 17.6.

We therefore see partial evidence both for Theoretical Prediction~2 and to its Behavioral counterpart: Protection somewhat increases initial search activity, but the underweighting of rare events partially mitigating this positive effect.

To corroborate whether this is indeed due to underweighting, we ran the Singleton condition (which theoretically identical to the Protection condition) 
using two levels of reward for a first treasure: 320 (which is the same as the reward for first treasures in the Protection and No Protection conditions) and 260. The results show no significant effect of the reward level on the exploration rates, supporting the interpretation of our result above, whereby players are under-sensitive to the reward level (see supporting evidence in Appendix~\ref{apdx:levelsingleton}).

This result is in line with studies in the DfE literature (e.g., \citet{teodorescu2014learned,teodorescu2014decision}) which found that the frequency of a reward is more important than its magnitude in repeated settings where the environment is learned from experience.

\begin{figure}[t]
  \centering
  \begin{minipage}[t]{0.45\textwidth}  \includegraphics[width=\textwidth]{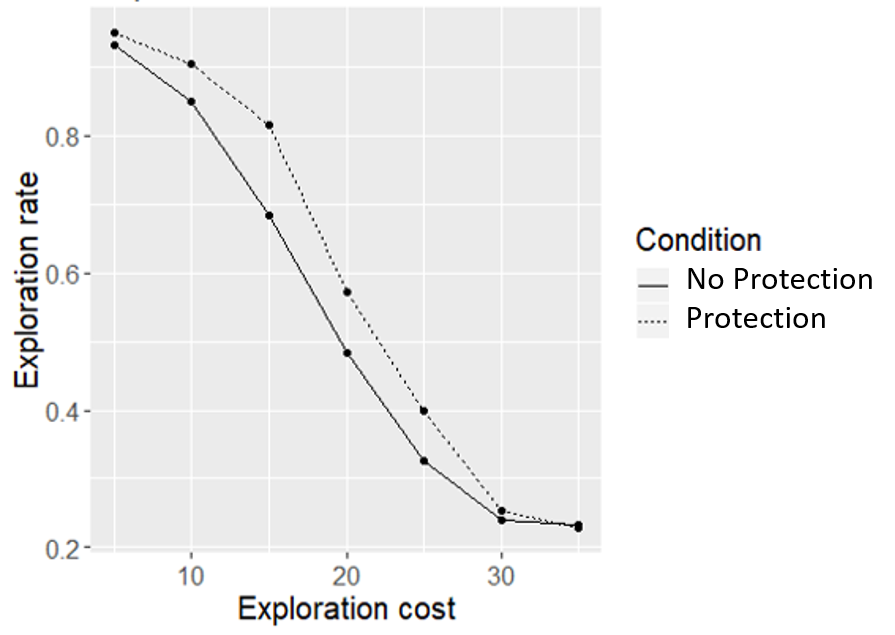}
    \caption{Exploration rates for first treasures.}
    \label{fig:first}
  \end{minipage}
  \hfill
  \begin{minipage}[t]{0.45\textwidth}
 \includegraphics[width=\textwidth]{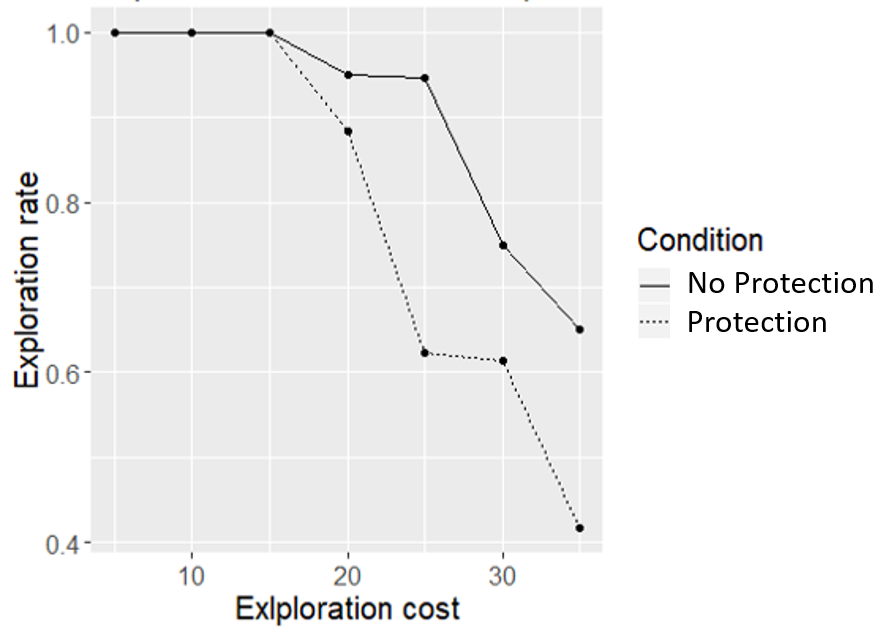}
    \caption{Exploration rates for subsequent treasures.}
    \label{fig:ceiling}
  \end{minipage}
\end{figure}

\subsection{The effect of protection on sequential search behavior}
Finally, we considered the effect of protection on sequential search behavior. 
We see an overall  decrease of 13\% from ($0.9$ to $0.78$)\rmr{check}. A more careful regression analysis reveals the effect as slightly smaller (10 percentage points, which are about $11\%$) and statistically significant. This may still not seem like a substantial difference, but Fig.~\ref{fig:ceiling} reveals the reason: there is a ceiling effect with subjects always searching when costs are $\leq 15$ under both conditions. When restricting analysis to turns with costs above 15, we observe a stronger effect of 17 percentage points (19\%), which is still statistically significant. 

As with previous results, this is corroborated by our threshold analysis, with the median threshold dropping from 29 to 26 when applying protection.\footnote{Here too there is a ceiling effect with many participants who \emph{always} search for sequential rewards.} 
We therefore find conclusive evidence for Theoretical Prediction~3. 

\subsection{Search Efficiency}
The results above imply a strong evidence against protection of initial discoveries, since its potential benefit for initial search is diminishing, while they substantially harm sequential search. 

However this point of view only considers the search efforts invested by participants, rather than its actual yield. 

We therefore ran another analysis, this time considering the actual overall number of treasures found under each condition, divided by the overall search costs. 


When comparing the number of searches to treasures found, we see a sharp drop from 8.4 searches-per-treasure without protection, to about 7 one protection is applied. 
This is due to players `wasting' some of their searches on treasures eventually picked by others. This occurs both when two or more players simultaneously dig the same treasure, and when players search around the same first initial treasure, all exerting effort  but competing for  the only two available treasures. This latter inefficiency (but not the former) is also captured in our abstract model: in the initial search the rate of finding treasures is 1 (as $K=\sum_i r_i$) so there is no inefficiency. In sequential search the agents exert a total effort of $\sum_i w_i = (\sum_i x_i)K$, but only get $\sum_i a_i = K$ applications, which implies that a fraction of $\sum_i x_i -1$ of the (sequential) search effort is wasted.

Thus, regardless of its effect on search behavior, protection has the added benefit of \emph{coordinating} players' effort.

\paragraph{Discussion:}  \rmr{This can be much shorter}Granting protection on a first treasure plays the role of marking the territory for the first finder, and signals other players to explore elsewhere. By doing so, it increases the coordination among the players. This allows the first finder to invest their exploration efforts more carefully, and to search more efficiently by maximizing the information gain from their successes as well as from their failures.    

\citet{bessen2009sequential} discussed complementary research and its effect on the patent protection efficiency. They defined complementary research as a case where
\begin{quote}
    ``each potential innovator takes a different research line and thereby enhances the overall probability that a particular goal is reached within a given
time"
\end{quote}
 They claimed that in the case of sequential and complementary research, the inventor and the society would be better off with no patent protection since
 \begin{quote}
    ``it helps the imitator develop further inventions and because the imitator may have valuable ideas not available to the original discoverer"
\end{quote}
In our design, this is reflected in sharing searching opportunities with players who have lower current costs, or more information regarding the subsequent treasures' location.  


\rmr{to discussion:
In the game, sequential treasures are organized in a specific structure and all players know the possible ways to find them. In this case less search activity means slower discovery process. In reality, however, subsequent discoveries are not a certain result. Therefore, sequential discoveries may be completely missed altogether due to the reduction in search efforts.
}




\subsection{The Effect of Observing Others' Success}\label{sec:forgone}
Finally, we wanted to see if the fact that participants played in a group affected their behavior, even when they are not competing (i.e. in the Protection condition). 
Theoretically, the Protection and Singleton conditions are the same, and we can therefore attribute any differences to behavioral factors, and in particular the fact that  in the singleton game players cannot see the other players' actions.

Indeed, we found that players search a bit more for initial treasures in the Protection condition (i.e. when observing others) but this is not statistically significant. 

In contrast, the increase in search rate immediately after another player finds a treasure was significant,  
These results are in line with \citep{plonsky2020influence} who found that exposing participants to a positive forgone payoff, increases risky exploration. 

Further details can be found in Appendix~\ref{apdx:forgone}.

\section{Conclusions and General Discussion}\label{sec:conclusion}
In this study we developed a new paradigm to investigate the question of the effectiveness of protecting discoveries as a tool to encourage innovation. This paradigm distinguishes between first treasures, that represent initial discoveries, and subsequent treasures, that represent subsequent discoveries. First treasures are found with low probability, and require no previous information. Subsequent treasures are found with higher probability and rely on information obtained from first treasures. 

Our findings show that the benefit of protecting subsequent searches around initial discoveries stems from increasing exploration efficiency, rather than encouraging exploration intensity. While the theoretical benchmark analysis imply that protection should increase the search for first treasures, the observed exploration rates for first treasures did not differ significantly between the experimental conditions. This result is in line with behavioral studies that found that when probabilities are learnt through on going feedback (like in the current experiment and in most real life scenarios), people tend to underweight rare events \citep{ teodorescu2014learned, teodorescu2014decision}. Since finding a first treasure is a rare event, players were under-sensitive to the reward it yields, thus increasing this reward through protecting first treasures did not enhance exploration activity as theoretically predicted.

Furthermore, discovery protection decreased exploration for subsequent treasures. This result is in line with previous studies regarding patent protection (e.g. \citep{bessen2009sequential,boldrin2008against,galasso2014patents}) and attests to the negative aspects of discovery protection i.e. the inhibition of cumulative innovative activity. 

We found that the main advantage of such protection is by improving coordination among players and thereby increasing exploration efficiency. Introducing protection forces a wider distribution of exploration efforts, and reduces duplicated searches. It also allows more efficient exploitation of knowledge about failed exploratory efforts. Hence, other knowledge management mechanisms that maintain these benefits of discovery protection, but without inhibiting competition for innovations, should be considered.

One suggestion can be to encourage communication between explorers about unsuccessful searches. In the case of protecting discoveries, this kind of information can be obtained by employing a market for failed R\&D efforts. Currently, firms' research failures are treated as trade secrets and withholding this information from other researchers leads to an inefficient allocation of exploration efforts. Since information about failure is valuable, allowing a trading mechanism may cause Pareto improvements in the R\&D market.

Another possible alternative to the protection system is to implement an insurance mechanism in the innovation market. Insurance companies could compensate innovators for failed exploratory efforts, and charge a share of their profit from successful exploration. Insurance companies will have an incentive to reveal information about failed exploratory efforts to minimize paying compensation costs to other inventors exploring the same direction. This would improve coordination among innovators, without the social cost of legal monopoly.

In the case of innovating behavior within organizations, researchers can improve their coordination by forming a dedicated forum where they discuss their failures and learn from them. Researchers often tend to cover their failures to avoid bad reputation. Therefore, managers should motivate their employees to share their failures  and draw conclusions for future trials. 

A different policy implication we derive from our findings is that the disclosure of discoveries plays an important role in encouraging innovation. When an inventor observes a successful discovery made by a rival inventor, it encourages him to explore more intensively. Protecting discoveries by granting the researcher who made them an exclusive right to search for subsequent discoveries may increase search efficiency (as our results demonstrate) but protecting discoveries in the sense of keeping their very existence a secret (such as trade secrets) can lower the motivation of others to explore .

Last, it is important to note that investigating the effect of protection through a simplified game setting, bears some limitations. While it allows for collection of more tractable data, it may decrease ecological validity. For instance, creativity and inspiration could not be considered in such reduced form. In addition, the simple setup limits the scope of discussion to the difference between two specific boundary regimes, with and without protection, despite the fact that most of real-world situations lay on a spectrum between these two extreme regimes. Moreover, our setting excludes cases where the initial discoveries are worth less than its improvements. For example, mRNA vaccines have been around before Covid, but it looks like the adaptation of the technology to Covid was financially more lucrative. We also assumed similar costs distribution of all players, and of all treasures, where in reality this assumption may not always hold. Finally, in order to keep the game as simple as possible, we did not include the option to sell and buy licenses in the "Protection" condition. Licensing allows innovators to sell their rights to other innovators, that may have lower searching costs, and by that to improve efficiency \citep{phelps2018need}. However, licensing is not a guarantee to knowledge transfer due to problems as transaction cost, partial information and other market failures \citep{galasso2014patents}.

Future extensions of the current theoretical and experimental work could be to explore the optimal length (e.g. the number of rounds) or scope (e.g. the number of protected hexagons) of protecting initial discoveries. Manipulating heterogeneity among players (e.g. in the cost distribution or in the scope of the searching area) can also provide an interesting insight. Finally, trade in search licensing may improve the ecological validity of the experimental paradigm, shedding light on the (in)efficiency of licensing policies.

\bibliographystyle{ACM-Reference-Format} 
\bibliography{patent.bib}
\clearpage
\onecolumn
\appendix
\section{Omitted Proofs}\label{apx:proofs}

\rmr{
\begin{proposition} Consider general $c_r,c_a$ cost functions. 
    The optimal strategy in the protected condition is to play $x^*_0 = 1$, and  $r^*_0$ is the unique $r$ s.t. $c'(r)=\frac{R_r+R_a}{2}$.
\end{proposition}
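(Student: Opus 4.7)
The plan is to mirror the two-step structure of the proof for the single-cost version: first pin down the optimal exploitation rate at $x_0^* = 1$, and then reduce to a one-variable optimization over $r_0$.

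For the first step, I would dispose of $x_0 > 1$ by the same dominance argument as before: once $w_0 = r_0 x_0 > K = r_0$, the application $a_0$ is capped at $K$ (with no competitor to consume the remaining knowledge), so increasing $x_0$ beyond $1$ only raises the cost $c_a(w_0)$ while leaving $a_0 R_a$ unchanged. Ruling out $x_0 < 1$ is the more delicate case. Writing
\begin{equation*}
u_0(r_0, x_0) = r_0 R_r + r_0 x_0 R_a - c_r(r_0) - c_a(r_0 x_0),
\end{equation*}
the $x_0$-FOC yields $c_a'(r_0 x_0) = R_a$, and substituting into the $r_0$-FOC gives $c_r'(r_0) = R_r$. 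I then want to derive a contradiction from these two equalities together with $x_0 < 1$ and the assumption $R_a \geq R_r$.

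The main obstacle is that without the identity $c_r \equiv c_a$, the original contradiction does not carry over verbatim. When both costs coincide, strict monotonicity of $c'$ combined with $r_0 x_0 < r_0$ forces $c'(r_0 x_0) < c'(r_0)$, clashing with $R_a \geq R_r$. For distinct $c_r, c_a$ I would need a linking hypothesis; the most natural is that exploitation is marginally no more expensive than research, i.e.\ $c_a'(y) \leq c_r'(y)$ for all $y$. Under this hypothesis, convexity of $c_a$ gives $R_a = c_a'(r_0 x_0) \leq c_a'(r_0) \leq c_r'(r_0) = R_r$, contradicting $R_a \geq R_r$. I would flag that the proposition presumably needs such an assumption in its hypotheses, since without some comparability between $c_r$ and $c_a$ an interior optimum with $x_0 < 1$ is not ruled out in general.

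Once $x_0^* = 1$ is secured, the residual problem $\max_{r_0}\; r_0(R_r + R_a) - c_r(r_0) - c_a(r_0)$ is concave by convexity of $c_r$ and $c_a$, so the FOC
\begin{equation*}
c_r'(r_0^*) + c_a'(r_0^*) = R_r + R_a
\end{equation*}
uniquely determines $r_0^*$ (strict uniqueness follows if at least one cost is strictly convex). Specializing to $c_r = c_a = c$ recovers the stated condition $c'(r_0^*) = (R_r + R_a)/2$.
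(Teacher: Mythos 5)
Your proposal follows essentially the same route as the paper's own proof: dominance of $x_0>1$, a first-order-condition contradiction ruling out $x_0<1$, and reduction to the one-variable problem with FOC $c_r'(r_0^*)+c_a'(r_0^*)=R_r+R_a$. The one substantive difference is in the middle step, and there your version is actually the more careful one. The paper's argument silently drops the subscripts in the crucial chain (it writes $R_r < R_a = c'(r_0x_0) < c'(r_0)$ and then bounds $\partial u_0/\partial r_0$ using a single $c'$), which is only valid when $c_r\equiv c_a$; you are right that for genuinely distinct cost functions the two FOCs give $c_a'(r_0x_0)=R_a$ and $c_r'(r_0)=R_r$, and that turning these into a contradiction requires a linking hypothesis such as $c_a'\leq c_r'$ pointwise. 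This is a real gap in the proposition as stated (whose conclusion also uses an unsubscripted $c'$), not in your argument. Two small points to tidy: your contradiction chain $R_a\leq\dots\leq R_r$ only clashes with the standing assumption $R_a\geq R_r$ if some inequality is strict, so you should invoke either $R_a>R_r$ or strict convexity of $c_a$ (the paper has the same looseness); and, like the paper, you implicitly dismiss the boundary $x_0=0$, which is harmless provided $c_a'(0)<R_a$.
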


\begin{proof}
If $x_0>1$ then $w_0 > K$, and $a_i=\frac{x_0}{x_0}K=K$. So the agent pays $c(w_0)>c(K)$ without getting any additional benefit beyond $R_a\cdot K$. Thus $x_0>1$ is dominated.

If $x_0<1$ then $a_0=x_0 K = x_0 r_0$, and  
$$u_i=r_0 R_r + a_0 R_a- c_r(r_0) - c_a(w_0)= r_0 R_r + r_0 x_0 R_a- c_r(r_0) - c_a( r_0 x_0).$$

 We consider both partial derivatives of $u_0$:
\begin{align*}
    \frac{\partial u_0}{\partial r_0} & = R_r(1+x_0) - c'_r(r_0)-x_0 c'_a(r_0 x_0)\\
    \frac{\partial u_0}{\partial x_0} & = r_0 R_a - r_0 c'_a(r_0 x_0) \tag{since $a_0=x_0 r_0$}\\
    &\geq R_r - c'_a(r_0 x_0) \tag{by assumption} \\
    &> R_r - c'_a(r_0) \tag{by convexity and $x_0<1$}
     =  \frac{\partial u_0}{\partial r_0}.
\end{align*}
If the strategy is optimal, then both derivatives are 0. However this would mean
\begin{align*}
    R_r &< R_a=c'(r_0 x_0) < c'(r_0) &\Rightarrow\\
    \frac{\partial u_0}{\partial r_0} & = R_r+x_0 R_r - c'(r_0)-x_0 c'(r_0 x_0)\\
    &< c'(r_0) + x_0 c'(r_0 x_0) - c'(r_0)-x_0 c'(r_0 x_0)=0,
\end{align*}
i.e. a contradiction.

The strategy of the player therefore reduces to a single variable $r_0$, and the utility can be re-written as $u_0(r_0)=r_0(R_r+R_a) - 2c(r_0)$.
By derivation, we get 
that $r^*_0$ is the unique point where $c'_r(r)+c'_a(r) = R_r+R_a$.
\end{proof}
}
\begin{proposition}
    There is a symmetric equilibrium in the no-protection model, where for every agent $i$, 
    \begin{enumerate}
        \item  $c'(r^*_i)=R_r+\frac{R_a}{n^2}$; and 
        \item $x^*_i c'(n\cdot r^*_i x^*_i)= \frac{n-1}{n^2}R_a$.
    \end{enumerate}
\end{proposition}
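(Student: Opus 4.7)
The plan is to derive the two stated equations from the first-order conditions of a single player's best response to the symmetric profile, and then impose symmetry at the end. This is the standard recipe for characterizing interior symmetric Nash equilibria in smooth games, and the computation should go through cleanly here thanks to convexity of $c$.

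First I would fix all players $j \neq i$ at $(r^*, x^*)$ and write player $i$'s utility as a function of $(r_i, x_i)$. By the preceding observation, in any equilibrium $\sum_j x_j \geq K$, so the competitive rule $a_i = x_i K / \sum_j x_j$ is the correct formula rather than the no-competition variant. With $K = r_i + (n-1) r^*$ this gives
\begin{equation*}
u_i(r_i, x_i) = r_i R_r + \frac{x_i K}{x_i + (n-1) x^*}\, R_a - c(r_i) - c(x_i K).
\end{equation*}

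Next I would compute the two partial derivatives. The $x_i$-derivative is the cleaner one because $K$ does not depend on $x_i$: the quotient rule on the middle term yields $\frac{(n-1) x^* K}{(x_i + (n-1) x^*)^2} R_a - K\, c'(x_i K)$. Setting this to zero and then imposing symmetry ($x_i = x^*$, $K = n r^*$), the factors of $K$ and $n$ rearrange directly into the second equation $x^* c'(n r^* x^*) = \frac{n-1}{n^2} R_a$. The $r_i$-derivative is slightly messier: it picks up a term $x_i c'(x_i K)$ through the chain rule on $c(x_i K)$, together with $\frac{x_i}{x_i + (n-1) x^*} R_a$ from the applications term. At symmetry it reduces to $R_r + \frac{R_a}{n} - c'(r^*) - x^* c'(n r^* x^*) = 0$. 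Substituting the already-derived second equation collapses the last summand to $\frac{n-1}{n^2} R_a$, and a single line of algebra yields $c'(r^*) = R_r + \frac{R_a}{n} - \frac{n-1}{n^2} R_a = R_r + \frac{R_a}{n^2}$, which is the first equation.

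The main technical obstacle is pure bookkeeping: keeping the quotient-rule factor $(x_i + (n-1) x^*)^{-2}$ correctly aligned with the $x^* K$ in the numerator, and verifying that the cross term $x^* c'(n r^* x^*)$ appearing in the $r_i$-condition is exactly what the $x_i$-condition evaluates to, so that the two conditions decouple into the stated form. I would close the argument with a brief existence-and-uniqueness check: convexity of $c$ makes $c'$ strictly increasing, so the first equation determines $r^*$ uniquely; given $r^*$, the map $x \mapsto x\, c'(n r^* x)$ is continuous, strictly increasing, vanishes at $0$ and diverges as $x \to \infty$, hence the second equation admits a unique $x^* > 0$. This establishes that the symmetric equilibrium actually exists, and not merely that if one exists it must satisfy the two equations.
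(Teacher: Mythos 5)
Your proposal is correct, and it follows the route the paper intends: write player $i$'s payoff against the fixed symmetric profile, take the two first-order conditions, and impose symmetry. Your exploitation FOC matches the paper's exactly (quotient rule giving $\frac{(n-1)x^*K}{(nx^*)^2}R_a = K\,c'(nr^*x^*)$, hence $x^*c'(nr^*x^*)=\frac{n-1}{n^2}R_a$). Where you genuinely diverge is the exploration condition, and there your bookkeeping is the one that actually delivers the stated result. You keep the full derivative $R_r+\frac{1}{n}R_a-c'(r_i)-x_i c'(x_iK)$ — the last term coming from the chain rule on $c(w_i)=c(x_iK)$ with $K=\sum_j r_j$ — and then eliminate the cross term via the already-established exploitation FOC, obtaining $c'(r^*)=R_r+\frac{1}{n}R_a-\frac{n-1}{n^2}R_a=R_r+\frac{1}{n^2}R_a$. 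The appendix's printed computation instead writes the exploitation cost as $c(x_i)$ rather than $c(x_iK)$, so its $r_i$-derivative loses exactly that cross term and concludes $c'(r^*_i)=R_r+\frac{R_a}{n}$, which does not match item~1 of the proposition (the substitution step you use survives in the paper only inside a commented-out block). Two small points to tighten on your side: uniqueness of $r^*$ from the first equation needs $c'$ \emph{strictly} increasing, i.e.\ strict convexity plus a range condition on $c'$, not mere convexity; and since you only verify stationarity, a sentence on why these critical points are global best responses (concavity of $x_i\mapsto\frac{x_i}{x_i+x_{-i}}$, the Observation ruling out the under-exploitation branch) would be needed to fully justify the word ``equilibrium'' — a gap the paper's own proof shares.
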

\begin{proof}
\begin{description}
    \item[Exploration] Suppose that exploitation strategies $x_j$ are fixed, and by symmetry they are all equal so $x_j=x$ for some constant $x$.  We look for equilibrium exploration efforts $r_i$.
    Rewriting the utility function as a function of $r_i$,
    \begin{align*}
        u_i(r_i)&=r_i R_r + a_i R_a - c(r_i) - c(K x_i)
    \end{align*}
    \rmr{somehow the proof is completely wrong but the result is correct...
    The derivatives in equilibrium are 
    \begin{align}
    0&=\frac{\partial u_i}{\partial r_i} = R_r+\frac1n R_a - c'_r(r_i)-x\cdot c'_a(\sum_j r_j x) \label{eq:pu_r}\\
    0&=\frac{\partial u_i}{\partial x_i} = KR_a \cdot \frac{x_{-i}}{(x_i+x_{-i})^2}-K\cdot c'_a(K x_i) \label{eq:pu_x}
    \end{align}
    Now let $r^*,x^*$ be the symmetric equilibrium strategies, then $x^*_{-i}=(n-1)x^*$. We get
    \begin{align*}
        T^*&:= x^*\cdot c'_a(\sum_j r^*_j x^*) = x^* \cdot c'(nr^* x^*)\\
        T^*&= R_r+\frac1n R_a -c'_r(r^*)   \tag{From Eq.~\eqref{eq:pu_r}}\\
        T^*&= \frac{n-1}{n^2}R_a &\Rightarrow  \tag{From Eq.~\eqref{eq:pu_x}}\\
        &R_r+\frac1n R_a-c'_r(r^*)   = \frac{n-1}{n^2}R_a &\Rightarrow\\
        c'_r(r^*)&= R_r+\frac1n R_a-\frac{n-1}{n^2}R_a  = R_r+ \frac{1}{n^2}R_a
    \end{align*}
    }
    
    Recall that $x_i\geq k_i$ and thus 
    $$a_i=k_i = K\cdot \frac{x_i}{\sum_j x_j}= K/n = \frac{\sum_j r_j}{n}.$$
    Thus the utility as a function of $r_i$ is 
    $$u_i(r_i)=r_iR_r + \frac{R_a}{n}(r_i + r_{-i})-c(r_i)-c(x_i),$$
    and the derivative is 
    \begin{align*}
        \frac{\partial u_i}{\partial r_i}&= R_r + \frac{R_a}{n}-c'(r_i)\\ &\Rightarrow
        r^*_i = (c')^{-1}(R_r + \frac{R_a}{n}).
    \end{align*}
    \item[Exploitation]  Now assume that exploration strategies $r_j$ are fixed, so $K=\sum_j r_j$ is a constant, and we look for $x_i$. By our assumption of over-exploitation, $a_i=K\frac{x_i}{x_i+x_{-i}}$, and 
    $$u_i(x_i) = r_iR_r + a_i R_a -c(r_i)-c(K\cdot x_i) = r_iR_r + R_a\frac{x_i}{x_i+x_{-i}}K  -c(r_i)-c(K\cdot x_i).$$
    Taking derivative, 
    \begin{align*}
          0=\frac{\partial u_i}{\partial x_i} & = KR_a \cdot \frac{x_{-i}}{(x_i+x_{-i})^2} - K\cdot c'(K\cdot x_i)&\Rightarrow \tag{in eq.}\\
          x_{-i} R_a &= c'(K\cdot x_i)(x_i+x_{-i})^2\tag{assuming symmetry}\\
           (n-1)x KR_a&= c'(K\cdot x)(nx)^2 &\Rightarrow\\
           c'(K \cdot x)x=&R_a \frac{n-1}{n^2}, 
    \end{align*}
    which profs the theorem as $K=nr^*_j$.
\end{description}
\end{proof}

\begin{proposition}
    Suppose that $c(x) = \alpha \cdot x^\beta$.\\
    Then $x^*_i=\frac1n (\frac{R_a}{R_r})^{\frac{1}{\beta}}+\Theta\left(\frac{1}{n^{1+\frac{1}{\beta}}}\right)$.
\end{proposition}
\begin{proof}
    First note that $c'(x)=\alpha\beta x^{\beta-1}$. 
    Recall that $c'(r^*_i)=R_r+\frac{1}{n^2}R_a$ so that 
    $$r^*_i = (\frac{R_r+\frac{1}{n^2}R_a}{\alpha\beta})^{\frac{1}{\beta-1}}
    $$
    Next, recall that 
    $$x^*_i c'(K x^*_i) = \frac{n-1}{n^2}R_a = \frac{1}{n}R_a+ \Theta(\frac{1}{n^2}),$$
    where $K=nr^*_i = n(\frac{R_r+\frac{1}{n^2}R_a}{\alpha\beta})^{\frac{1}{\beta-1}} $.
    Plugging in our $c'$, 
    \begin{align*}
        \frac{1}{n}R_a& = x^*_i \alpha\beta \left(  n(\frac{R_r+\frac{1}{n^2}R_a}{\alpha\beta})^{\frac{1}{\beta-1}}\cdot x^*_i\right)^{\beta-1} + \Theta(\frac{1}{n^2})\\ 
        &=(x^*_i)^\beta\cdot \alpha\beta \cdot n^{\beta-1}\frac{R_r+\frac{1}{n^2}R_a}{\alpha\beta} + \Theta(\frac{1}{n^2})&\Rightarrow\\
        R_a &= (x^*_i)^\beta\cdot n^{\beta}(R_r+\frac{1}{n^2}R_a) + \Theta(\frac{1}{n}),
    \end{align*}
    so we can already see that $x^*_i=\Theta(\frac1n)$.
    Thus we can continue
    \begin{align*}
               R_a &= (x^*_i)^\beta n^{\beta}R_r+ (x^*_i)^\beta n^{\beta}\frac{1}{n^2}R_a + \Theta(\frac{1}{n}) \\
               &=(x^*_i)^\beta n^{\beta}R_r+ \Theta(\frac{1}{n^2}) + \Theta(\frac{1}{n})=(x^*_i)^\beta n^{\beta}R_r+ \Theta(\frac{1}{n})&\Rightarrow\\
               (x^*_i)^\beta &= \frac{R_a+\Theta(\frac{1}{n})}{n^\beta R_r} = \frac{R_a(1+\Theta(\frac1n)}{n^\beta R_r} &\Rightarrow\\
               x^*_i&=\frac1n\left(\frac{R_a}{R_r}\right)^{\frac1\beta}(1+\Theta(\frac1n))^{\frac1\beta}
               = \frac1n\left(\frac{R_a}{R_r}\right)^{\frac1\beta}\left(1+\Theta(\frac1{n^{\frac1\beta}})\right)\\
               &= \frac1n\left(\frac{R_a}{R_r}\right)^{\frac1\beta} +\Theta\left(\frac1{n^{1+\frac1\beta}}\right),
    \end{align*}
    as required.
\end{proof}

\newpage
\setcounter{page}{1}
\begin{center}
    Online Appendices
\end{center}
\
\section{Details of the Experiment} \label{apx:experiment}
One-hundred and fifty four (81 Female) 
Technion and Ben Gurion University 
students, with an average age of 25, participated in the study in exchange for monetary compensation. The study included participants aged 18 and older who signed a consent form to participate in the experiment. The forms were signed by hand on a page in front of the research team and kept in the laboratory. The study was carried out between January 21, 2019 and April 28, 2019.
We planned to have at least 15 groups of 4 participants in each of the Protection and No Protection conditions, and stopped data collection once this goal was reached. Eventually we collected data from 60 participants in the Protection condition,\footnote{One student was mistakenly invited to the lab twice, and therefore her second session was removed.} 60 participants in the No Protection condition,
and 34 participants in the Singleton condition. 

A performance based payment was added to (if positive) or subtracted from (if negative) a show-up fee of 30 NIS. 
\footnote{Participants obtained a total of 29 NIS (that equals about \$8.3) on average, in a game lasting around 35 minutes. Note that the mean payoff is lower than the show-up fee, which means that on average, the performance based payment was negative. This is a first indication that participants did not behave optimally (they could guarantee the show-up fee by skipping all rounds).}

\subsubsection{Experimental Design}
Participants played a lab adaptation of the ``The Competitive Treasure Hunt" game that was described in Section~\ref{sec:golddigger}. 
\begin{enumerate}
  \item The hive included 2100 hexagons (70X30), rather than infinite number of hexagons. This modification implies that after each round, information is revealed and the probabilities change in the following ways: (1) the probability of finding a first treasure decreases after a treasure is found, since overall fewer treasures are left; (2) the probability of finding a first treasure increases after a failed search only for the player making the move, since players can observe all successful searches, but only their own failures. Consequently, the overall probability of finding a first treasure tends to decrease over time. However, since around each gold mine there are several known empty hexagons, the overall probability decreases only slightly.\footnote{We computed the probability of finding a first treasure in the last round of each game, and obtained on average the probabilities 0.0445, 0.0458 and 0.0492 in the Protection, No Protection and Singleton conditions, respectively. As previously described, the probability decreases on average over time, so these numbers estimate the minimum probability in each game. We can thus see that these probabilities are relatively close to 0.05.}
  \item Each game included 50 rounds, rather than infinite number of rounds. There is strong evidence that although in traditional game theoretic analysis any finite horizon may completely change the structure of equilibria, human players only take this into account (end-game effect) very close to the actual termination. E.g. in \citet{normann2012impact} end-game effect was explicitly measured only in the last 3 rounds out of 22 of the Repeated Prisoners' Dilemma. They also compared behavior under known, unknown, and random termination rules and find that differences in behavior only start $\sim$10 rounds before termination. 
  Moreover,  RPD is a deterministic game. Adding randomness to the game (as in our case) substantially reduces endgame effect, since it negates the value of looking ahead in general. E.g. while medium-level Chess programs typically consider $\sim$15 steps ahead, the Backgammon programs only need to look 2 steps ahead in order to beat the best human players \citep{tesauro1994td}.
  As we explain below, to compare participants' behavior to the theoretical, infinite time horizon benchmark, we excluded the last 12 rounds in each game from the analysis.
  \item Players were not informed in advance of the probability to find a treasure. As noted, with sufficient experience the learnt probability to find a treasure should converge to the actual one and lead rational players to a stable optimal threshold. The experiment included 4 games of 50 rounds each, which should allow for sufficient learning.\footnote{Indeed, analysis of potential changes throughout the games revealed quick learning and no significant differences between the first and the last game participants played.}
  \item If two or more players choose the same hexagon simultaneously, the payoffs that each player obtains follow this rule: if two players find the same treasure, each of them obtains 0.2 from the original reward of this treasure (which amount to 64 if this is the first treasure in the mine, and 16 if this is the second or the third). If three players find the same treasure, each of them obtains 0.05 from the original reward and if four players find it, each of them obtains 0.\footnote{This rule was designed to account for the fact that real life competition decreases the total producers surplus.} 
\end{enumerate}

\subsubsection{Procedure}
In each experiment's session, students invited to the lab were randomly assigned into groups of four. Each group was randomly assigned into the ``Protection" or the ``No Protection" condition. 
All the remaining participants were assigned to the ``Singleton" condition. 
Each participant played four games with 50 rounds per game.\footnote{18 participants from the Singleton condition played 5 rather than 4 games. For those participants, we analyzed only the first four games played.}

The players were not informed about the other players' identities, but they did know their group's size. 

The participants received three pages of instructions, which included pictures of different states of the game with explanation about the shape of mines and the meaning of different colors and marked areas (see Appendix~\ref{apdx:instructions}). Participants were informed about the structure of mines (i.e. a tight triangle), but not about their frequency and their location in the hive. In the Protection condition, the instructions explained that when a player finds the first treasure, he obtains an exclusive right to explore surrounding (adjacent) hexagons and benefit from the subsequent treasures. In the No Protection condition the instructions explained that the more players who find the same treasure, the lower the reward it yields. To make sure that the players understood the instructions well, before starting the first game they had to answer a short quiz with questions concerning the instructions of the game. The game started only after all the examinees answered all the questions correctly.
We did not mention any economic or domain specific terms such as ``protection", ``innovation" etc. in any of the condition's instructions.

Note that in all conditions, the players receive the same instructions (except for the introduction of protection rules in the protection condition), and the treasures location as well as the payoff procedure were the same.

 The game progressed as follows. First, the computer displayed the hive, containing 2100 hexagons. The players received a message stating the exploration cost for this round, and asking each player if he wants to skip the round or to explore under the current exploration cost. After making their choice, players were asked to wait for the other players to make their choices.\footnote{This message appeared also in the Singleton condition to match this condition to the other conditions.} 

If a player decided to skip, he gained 0, and the round was over for him. If a player decided to explore, he could choose one of the hexagons in the hive that was not yet colored. At the end of each round, the players received a message with their payoff from the round, calculated as the reward obtained minus the current exploration cost.

\subsection{Screenshots of the Competitive Treasure Hunt Game}\label{apdx:screenshots}
\begin{figure}[H]
    \centering
    \includegraphics[width = \textwidth]{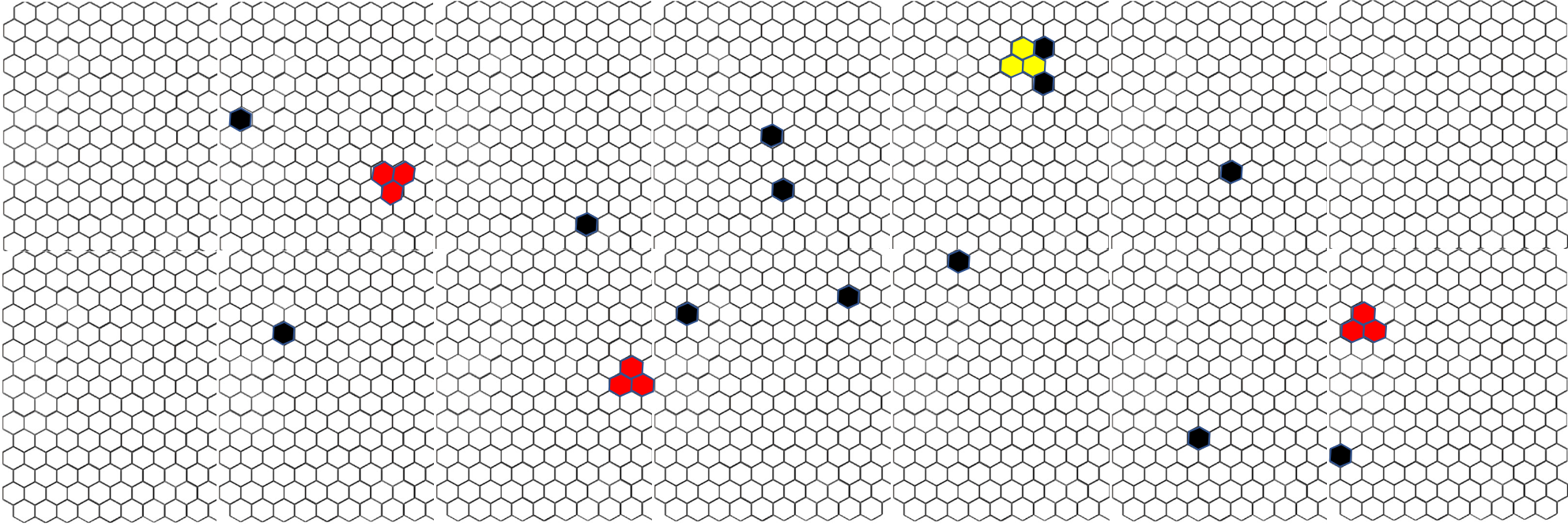}
    \caption{A screenshot of a typical end game under the protection condition.}
    \label{}
\end{figure}
\begin{figure}[H]
    \centering
    \includegraphics[width = \textwidth]{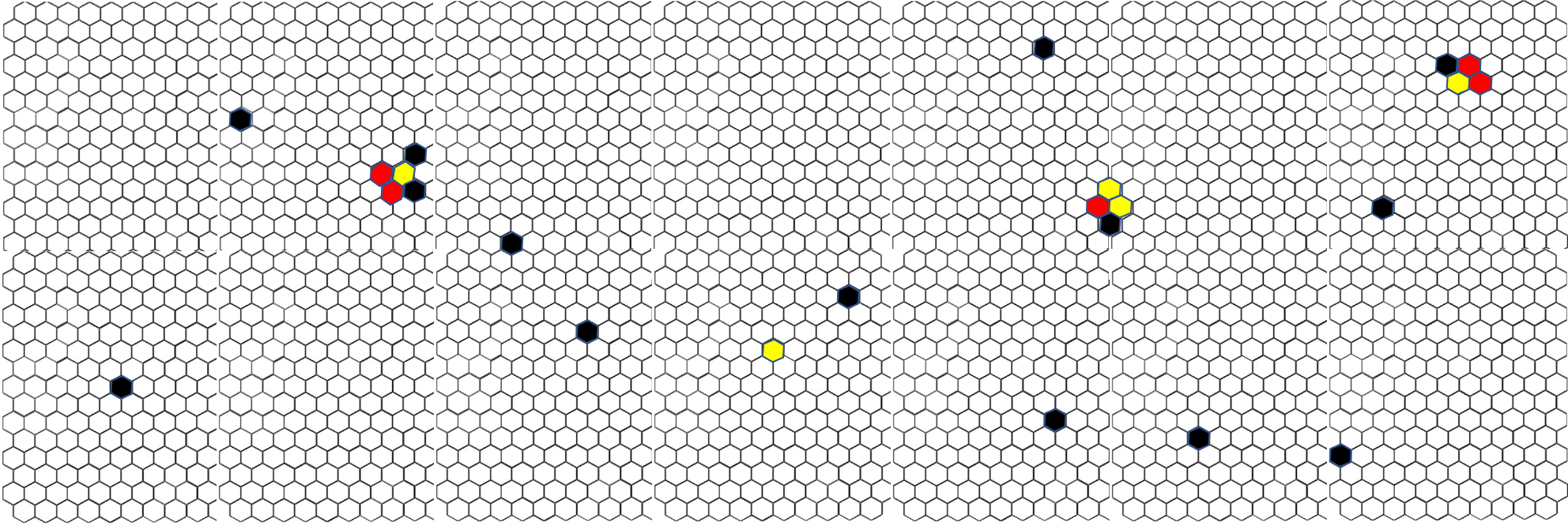}
    \caption{A screenshot of a typical end game under the no protection condition. }
    \label{}
\end{figure}

\section{Simulations}\label{apdx:simulation}
This section, presents the simulation results. The simulation obtained 7 different values for the cost threshold for first treasures and 7 different values for the cost threshold for subsequent treasures, overall 49 combinations of both. We ran the simulation 10000 times in any combination, and took mean values of the number of treasures and payoffs. 

Figures \ref{fig:individuallevel} and~\ref{fig:grouplevel} show that under a rational behaviour assumption (which is marked in a red circle),\footnote{In the No Protection condition, we took the cost within the possible range.} players find more treasures in the Protection condition, both at the group and at the individual levels. Figure \ref{fig:payoffsimulation} shows that in the Protection condition, optimal strategy leads to a payoff maximization, and in the No Protection condition, players could increase their payoffs by collectively deciding to explore less for subsequent treasures, below the equilibrium strategy. This result shows how lack of coordination among the players in the No Protection case, causes a reduction in payoffs.

In the simulation, we also measured the efficiency of exploration by the number of duplicated treasures, which is the amount of treasures that were found by more than one player. Figure~\ref{fig:duplicated} presents the number of duplicated treasures in the No Protection condition, and shows that this number increases as the thresholds increase. Structurally, there are no duplicated treasures in the Protection condition simulation.
\begin{figure}[H]
    \centering
    \includegraphics[width = \textwidth]{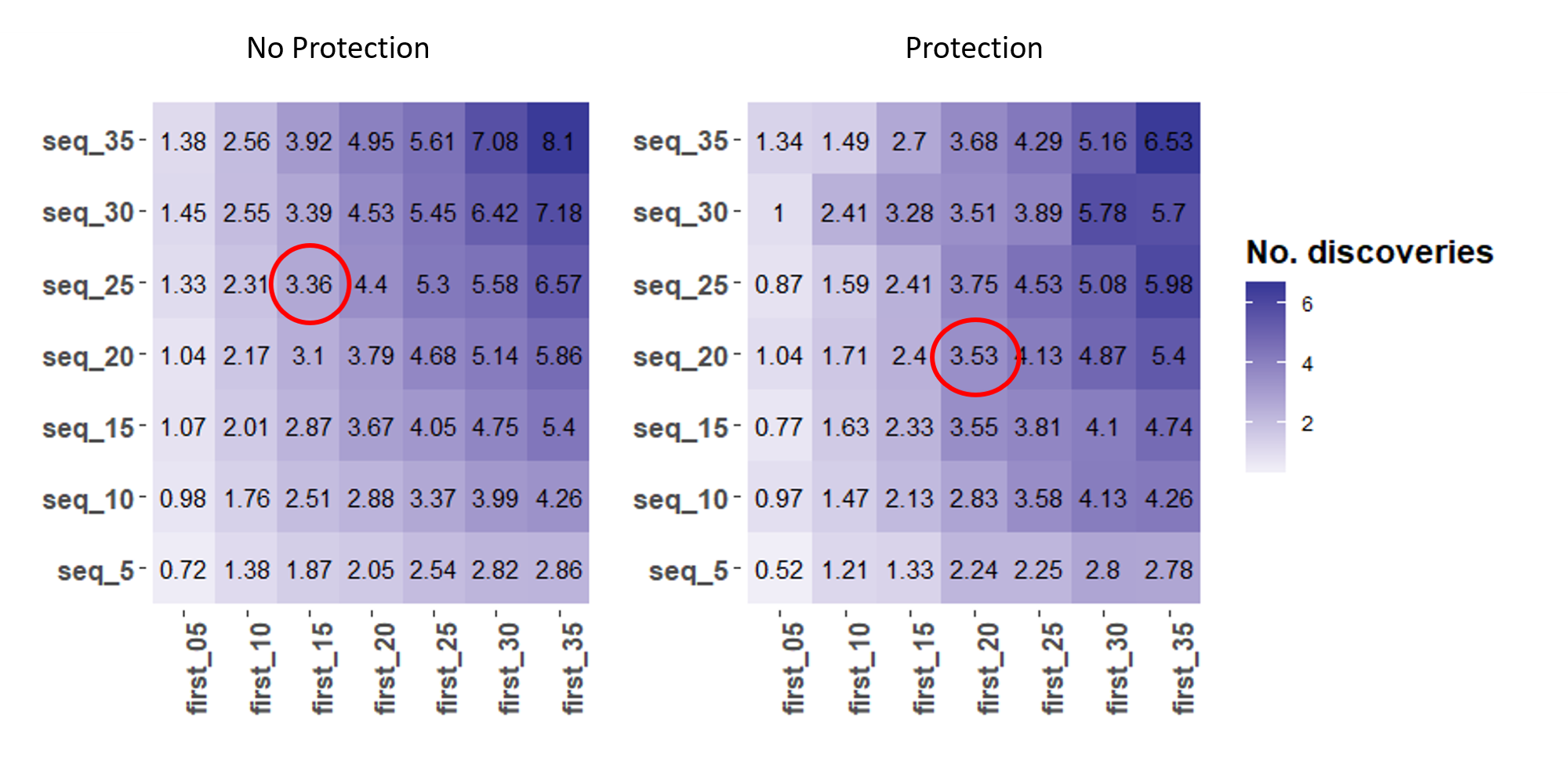}
    \caption{The number of treasures each player found, as a function of the chosen cost threshold in searching for first and subsequent treasures. In the No Protection condition we consider only symmetric strategies, where all players choose the same strategy. We can see that choosing the optimal strategies yields almost the same number of treasures in both conditions. }
    \label{fig:individuallevel}
\end{figure}
\begin{figure}[H]
    \centering
    \includegraphics[width = \textwidth]{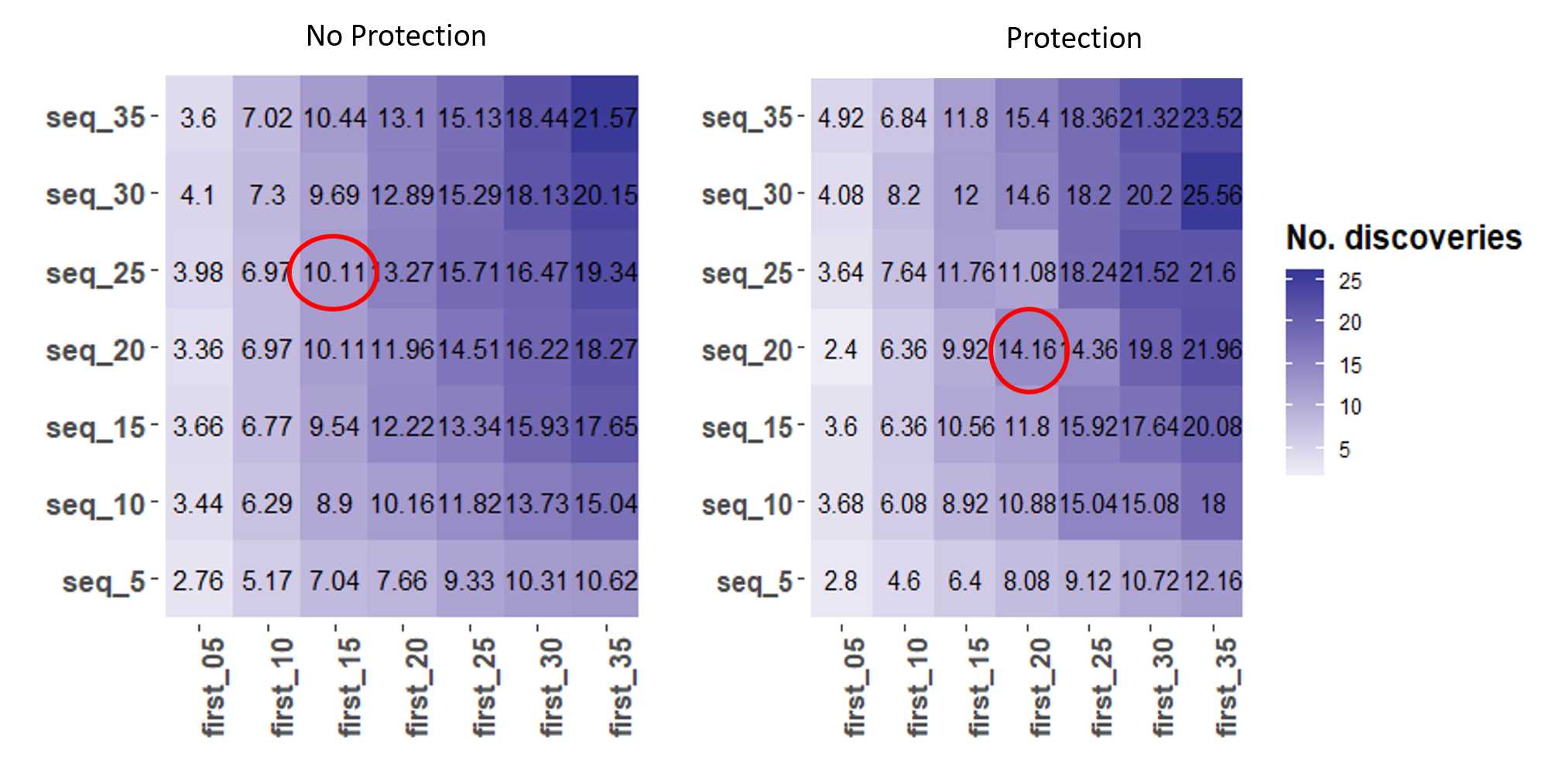}
    \caption{The number of treasures players found at the group level. In the No Protection condition we consider only symmetric strategies, where all players choose the same strategy. We can see that players found more treasures under the Protection condition.}
    \label{fig:grouplevel}
\end{figure}
\begin{figure}[H]
    \centering
    \includegraphics[width = \textwidth]{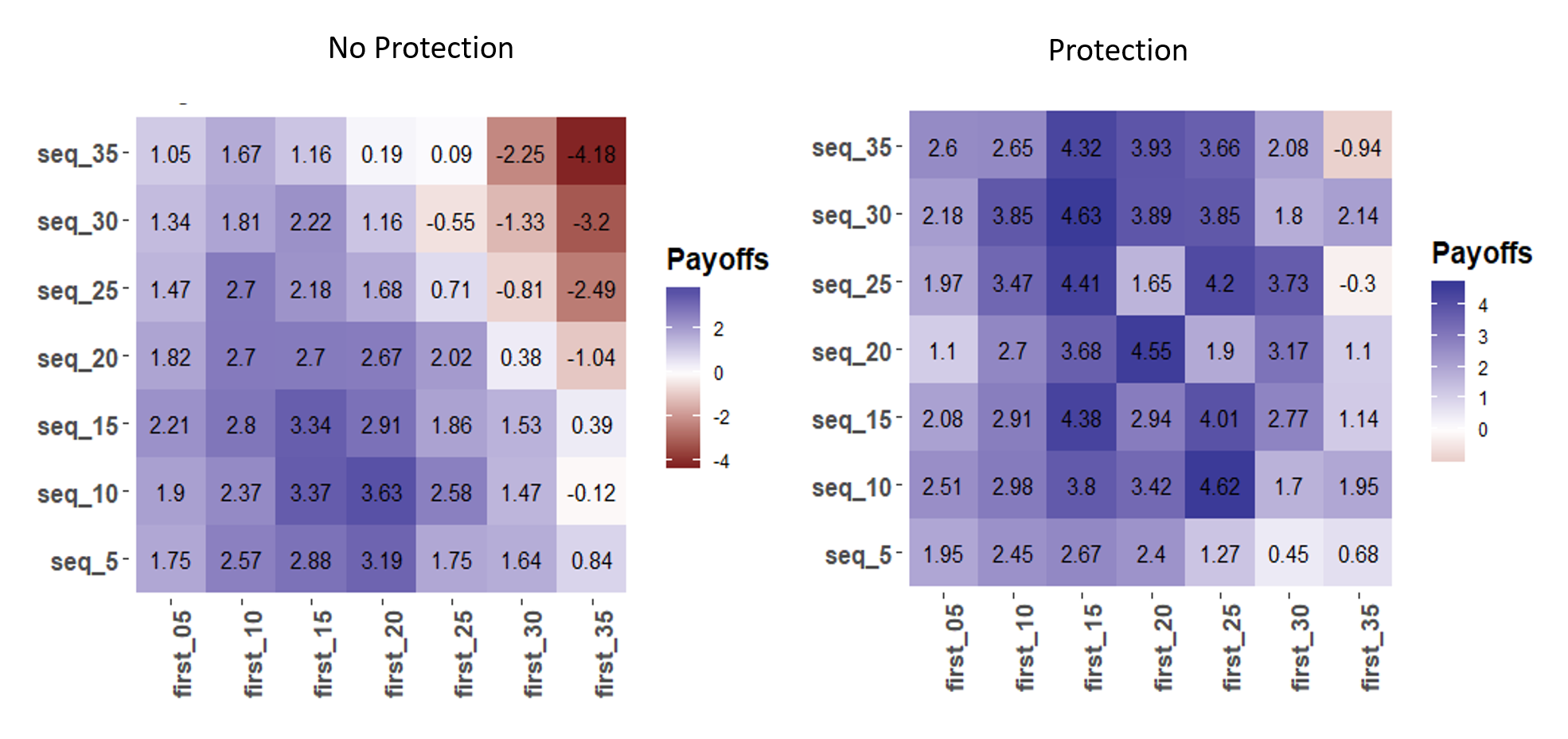}
    \caption{The payoffs as a function of cost threshold. In the No Protection condition we consider only symmetric strategies.}
    \label{fig:payoffsimulation}
\end{figure}
\begin{figure}[H]
    \centering
    \includegraphics[width = 0.5\textwidth]{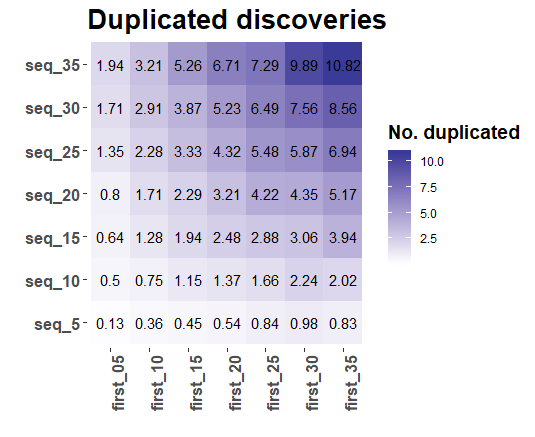}
    \caption{The number of treasures that were found by more than one player under the No Protection condition. }
    \label{fig:duplicated}
\end{figure}
\section{The Game Instructions}\label{apdx:instructions}
\begin{figure}[H]
\includegraphics[width=0.9\textwidth]{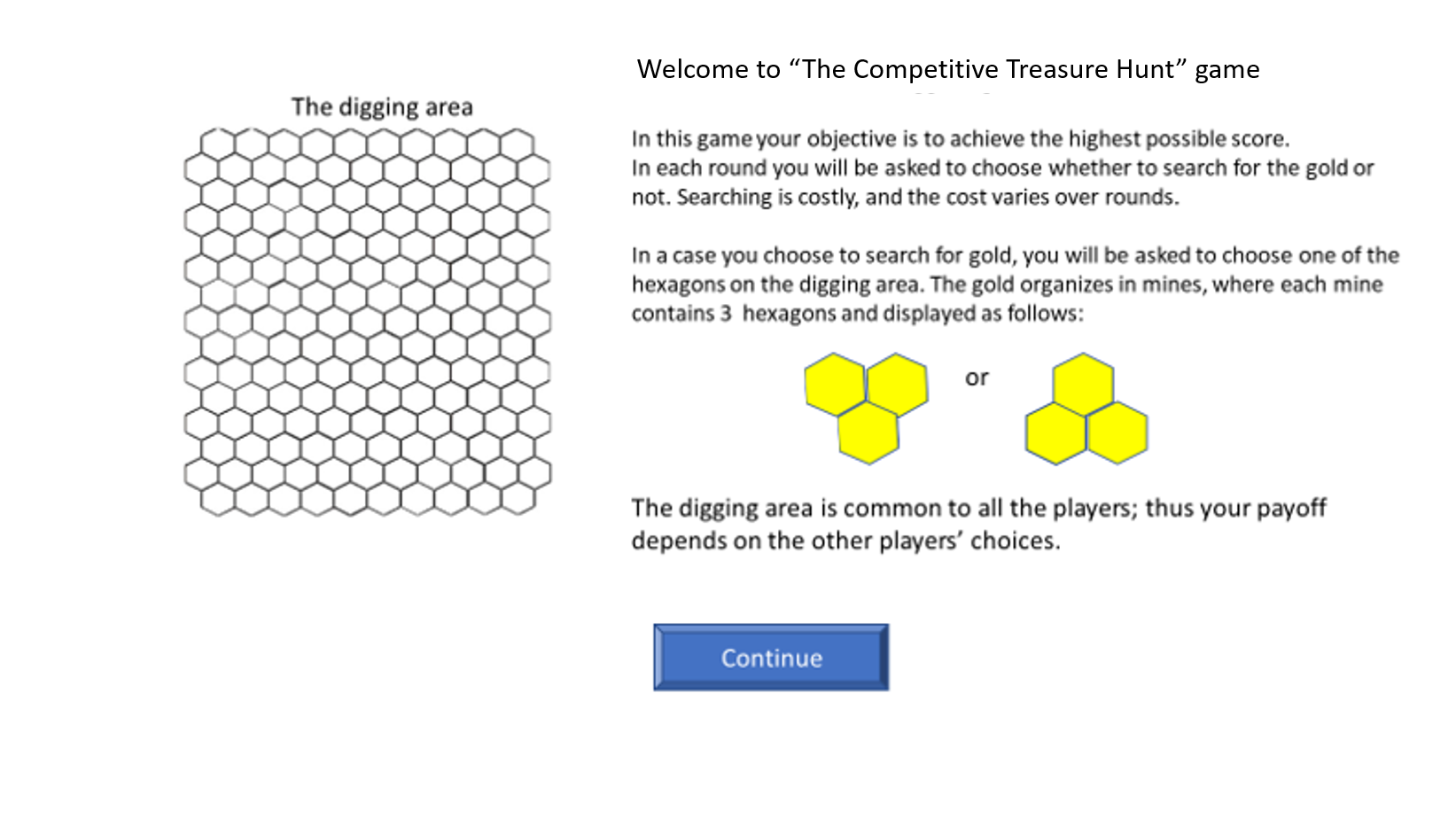}
\caption{Instructions that are common to all conditions}
\label{fig:Instruction1}
\end{figure}
\begin{figure}[H]
\includegraphics[width=0.9\textwidth]{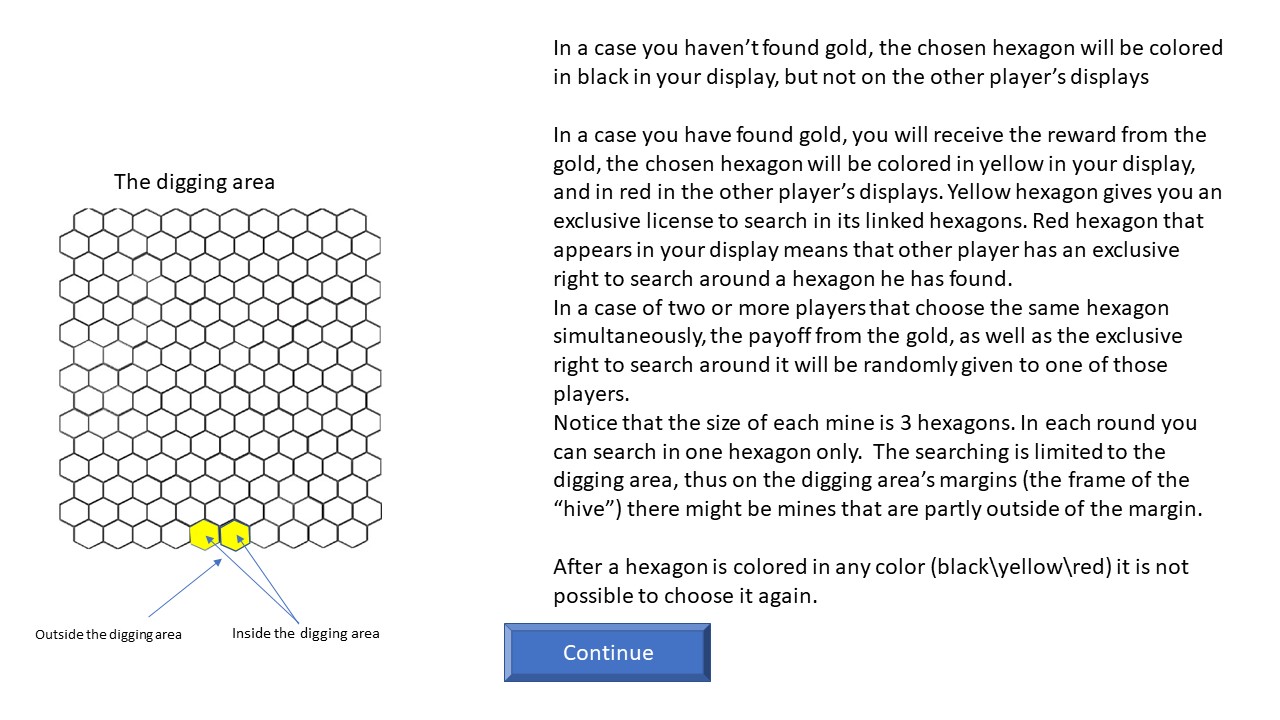}

\label{fig:Instruction3}
\end{figure}
\begin{figure}[H]

\includegraphics[width=0.9\textwidth]{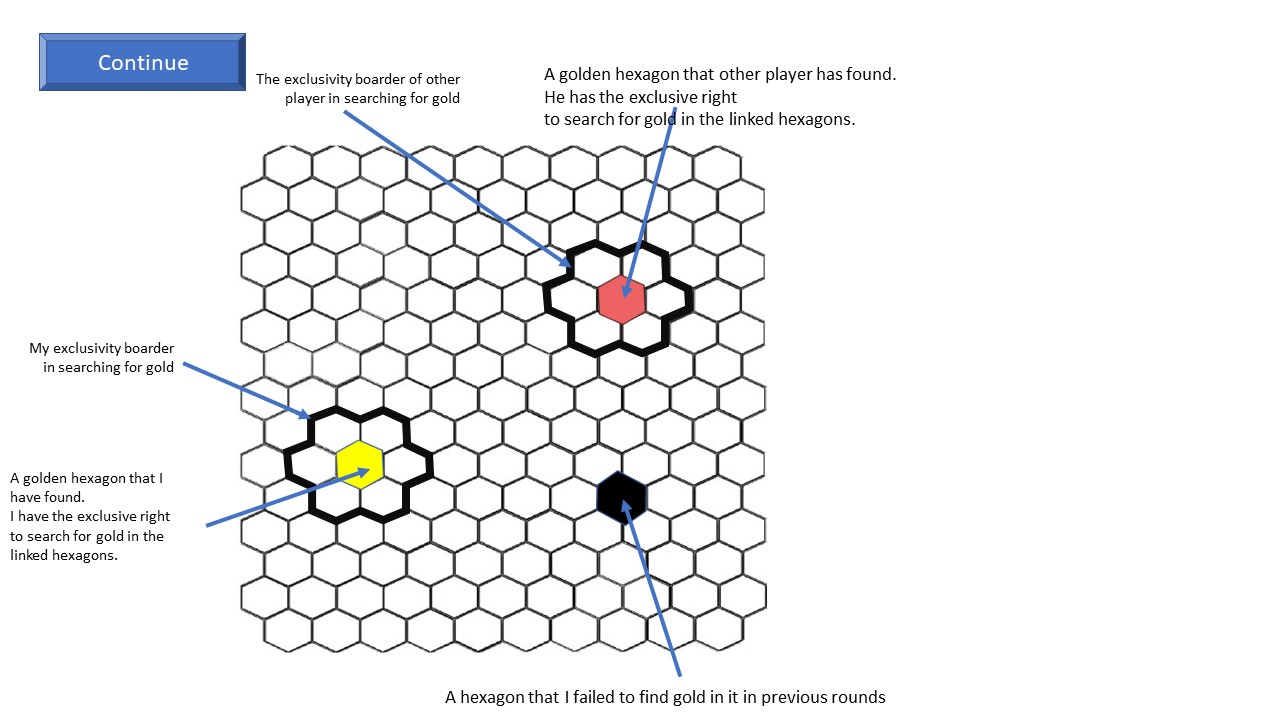}
\caption{Instructions to the Protection condition}
\label{fig:Instruction4}
\end{figure}

\begin{figure}[H]
\includegraphics[width=0.9\textwidth]{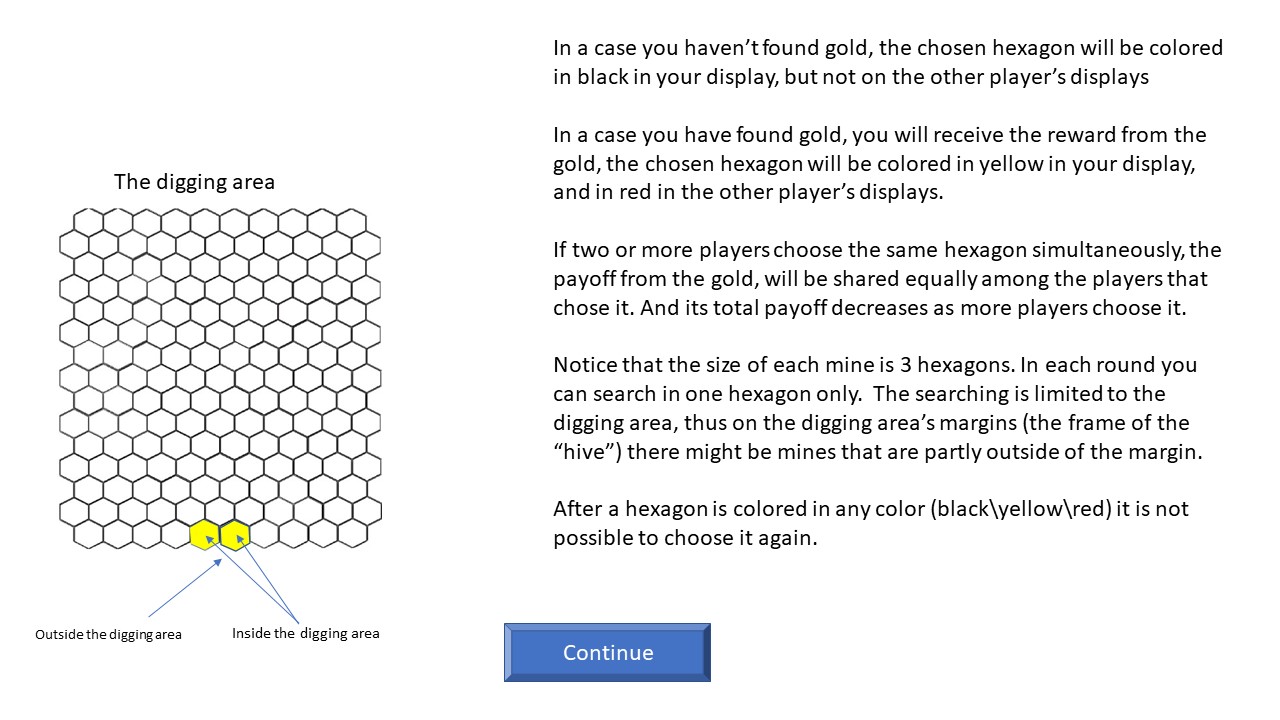}

\label{fig:Instruction5}
\end{figure}
\begin{figure}[H]

\includegraphics[width=0.9\textwidth]{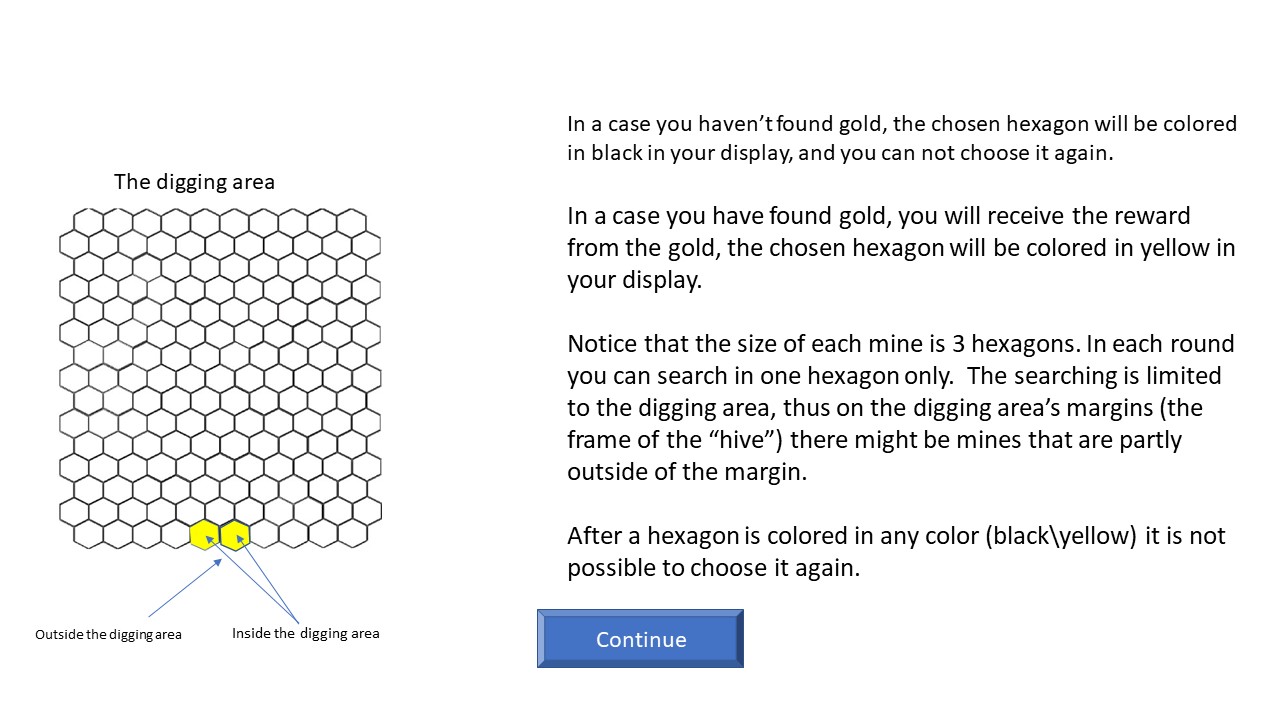}
\caption{Instructions to the No Protection condition}
\label{fig:Instruction6}
\end{figure}
\begin{figure}[H]
\includegraphics[width=0.9\textwidth]{"Instructions4".JPG}

\label{fig:Instruction7}
\end{figure}
\begin{figure}[H]

\includegraphics[width=0.9\textwidth]{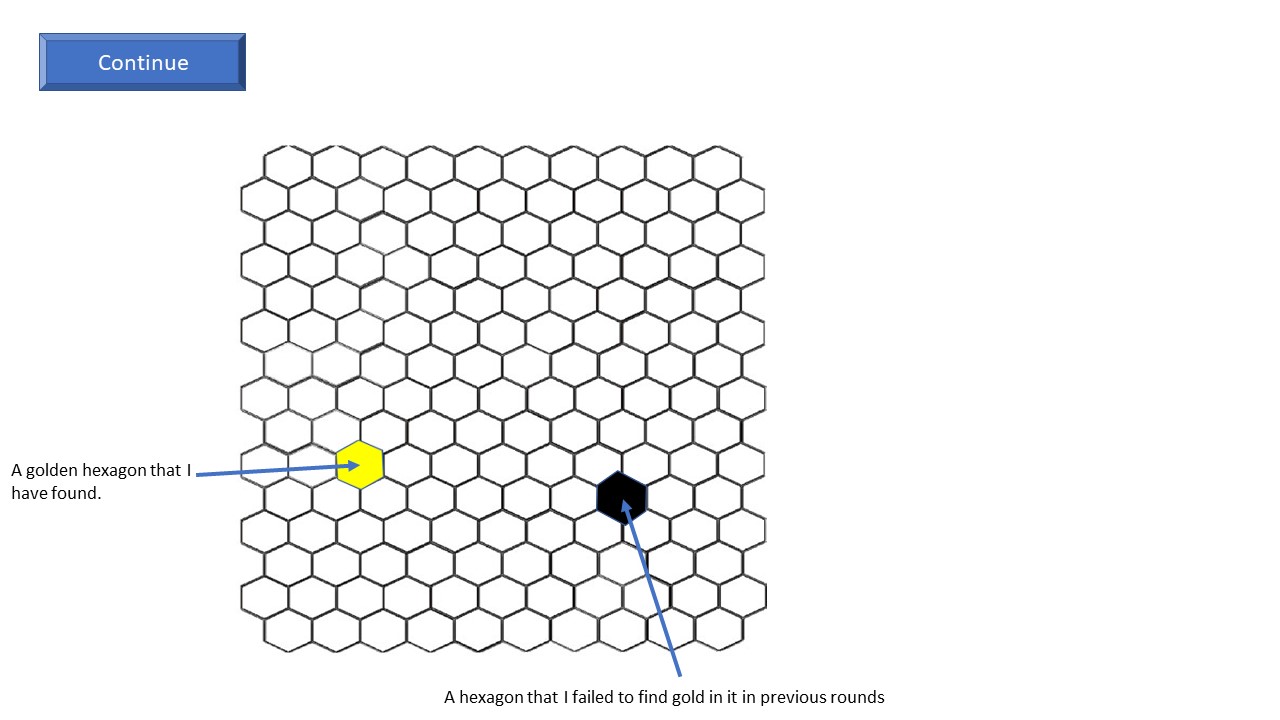}
\caption{Instructions to the Singleton condition}
\label{fig:Instruction8}
\end{figure}

\section{}\label{apdx:minepayoff}
\begin{figure}[H]
    \centering
    \includegraphics[width=0.6\textwidth]{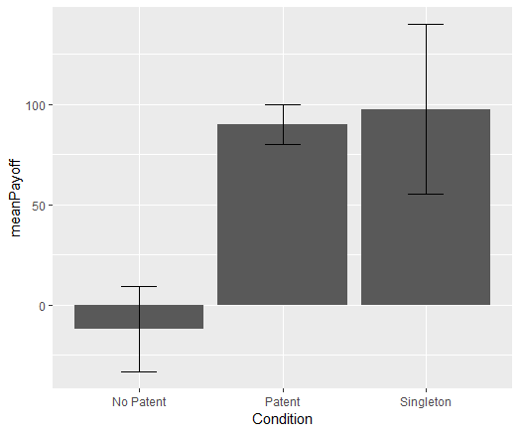}
    \caption{Average payoff from the second and
 the third treasure of any mine.}
    \label{fig:my_label}
\end{figure}

\section{}\label{apdx:levelsingleton}
\begin{figure}[H]
\includegraphics[width=0.6\textwidth]{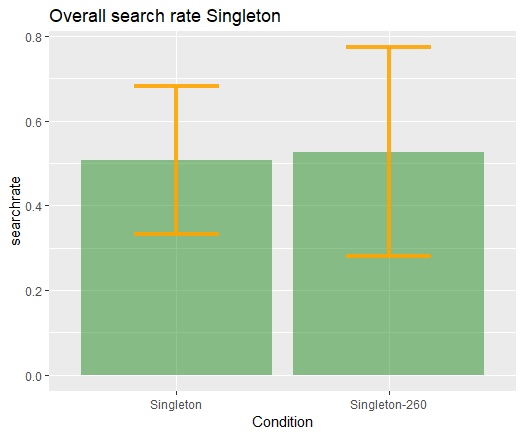}
\caption{Search rate for the first treasure in the Singleton condition when the reward from the first treasure equals 320 and 260, respectively. }
\label{fig:singleton}
\end{figure}

\section{Statistical analysis}\label{apx:stat}

In order to measure the correlation between participants and groups, we analyzed the data using a linear mixed effects model (LMM). We allow random intercepts for the players' ID and for the group of players.

\begin{table}[H]
\begin{center}
\begin{scriptsize}
\begin{tabular}{l c c c }
\hline
 & search & search & search \\
 & for first treasure & for self-subsequent treasure & for self-subsequent treasure\\
 & & & for high-costs only\\
\hline
(Intercept)                 & $1.11^{***}$  & $1.22^{***}$  & $1.45^{***}$  \\
                            & $(0.04)$      & $(0.05)$      & $(0.11)$      \\
Cost                           & $-0.03^{***}$ & $-0.02^{***}$ & $-0.02^{***}$ \\
                            & $(0.00)$      & $(0.00)$      & $(0.00)$      \\
Protection             & $0.06$        & $-0.10^{*}$   & $-0.17^{*}$   \\
                            & $(0.05)$      & $(0.05)$      & $(0.07)$      \\
\hline
AIC                         & 14734.27      & 273.42        & 305.32        \\
BIC                         & 14781.58      & 299.04        & 327.84        \\
Log Likelihood              & -7361.13      & -130.71       & -146.66       \\
Num. obs.                   & 19659         & 529           & 315           \\
Num. groups: ID             & 119           & 111           & 104           \\
Num. groups: GroupIndex     & 30            & 30            & 30            \\
Var: ID (Intercept)         & 0.06          & 0.02          & 0.06          \\
Var: GroupIndex (Intercept) & 0.00          & 0.01          & 0.01          \\
Var: Residual               & 0.12          & 0.08          & 0.10          \\
\hline
\multicolumn{4}{l}{\scriptsize{$^{***}p<0.001$, $^{**}p<0.01$, $^*p<0.05$}}
\end{tabular}
\end{scriptsize}
\caption{Search rate for first (column 1) and subsequent (column 2,3) treasures, column 3 presents search rate in high cost only.}
\label{table:Searching}
\end{center}
\end{table}

\paragraph{Initial search:} The first question addressed is how protecting first treasures affects exploration activity for initial search. Table~\ref{table:Searching}, column 1 presents the results of the exploration rates for first treasures. It shows that the coefficient of $Protection$ variable is insignificant, suggesting that when players explore for the first treasure in each mine, there is no significant difference between their behavior under the Protection and the No Protection conditions.

 Column~2 presents the results. It shows that protection on first treasures significantly decreases the overall tendency to explore for a subsequent treasure by 10 percent ($p<0.05$).

 Column~3 shows the same analysis for costs larger than 15. 

\section{Computing Thresholds}\label{apdx:actualthreshold}
In this section the methodology of computing the actual thresholds is presented.
We denote the states of the game by $F$ for exploration for a first treasure and $S$ for exploration for subsequent treasure. 
Notice that the definition for $S$ imposes asymmetric treatment between the Protection and the No Protection conditions. In the Protection condition, subsequent treasures are available only when the player finds the first treasure in the mine by himself. 
However, in the No Protection condition, subsequent treasures are available whenever any player finds the first treasure.\footnote{We removed from the analysis all observations in which we identified a search for a first discovery when it is possible to search for a subsequent discovery. As for the non-search classification when subsequent search is possible, it is in fact a non-search of both a first discovery and a subsequent discovery. It can therefore be classified as a non-search for a subsequent discovery, and so we did. Since there is no reason to believe that the threshold for a first search when a subsequent discovery is available, will be different from the threshold for a first search when it is not, removing the observations of these exceptional searches should not change the results.}
Thus, each round in the game is classified into these two states. For each player $i$, in each state of the game $\omega \in \{F , S\}$, we calculate the threshold cost value $T_{i,\omega}$ by the following process. 

First, for each possible cost $c_j \in \{5,10,15,20,25,30,35\}$ and for each round, $r$, and for each player $i$, we let $c_r^i$ be the realization of the cost for player $i$ in round $r$. We define the specification function: 
\begin{equation*}
S_i(c_j,c_r^i) = \begin{cases}
1 &\text{if $c_r^i \geq c_j$ and $search = 0$ or if $c_r^i < c_j$ and $search = 1$}\\
0 &\text{otherwise}
\end{cases}
\end{equation*} 
when ``search" is a variable that is set to be 1 when the player chooses to search, and to 0 when he decides to skip. Then for each $i,c_j,\omega$ we define the specification quality, which is essentially a 1-dimensional classifier with the 0-1 loss function, by: 
$$SQ(i,c_j,\omega) = \frac{\sum_{r\in \omega}S_{i}(c_j,c_r^i)}{|\omega_i|}$$ 
where $|\omega_i|$ is the number of rounds that player $i$ is in the state of the game $\omega$. Finally we define the threshold of each player in each state of the game as $T_{i,\omega}=argmax_{c_j}\{SQ(i,c_j,\omega)\}$ and the threshold quality by $TQ_{i,\omega} = max_{c_j}\{SQ(i,c_j,\omega)\}$ 

In other words, we consider each possible cost as a potential threshold. If this cost is the ``real" threshold, the player should explore whenever the cost realization is lower, and skip whenever it is higher. Then we took the potential threshold that yields the least number of mis-specifications of the ``real" threshold. Finally we define the threshold quality as the fraction of the number of rounds that the player's action was consistent with his threshold. 

Finally, we would like to find evidence for a consistent behaviour, that is expressed by a high quality of thresholds. Figure~\ref{fig:threshold_quality} represents the histogram of the thresholds quality. We can see that players tend to be highly consistent, where more than 84\% of the players had a threshold quality of 0.8 or more.

\begin{figure}[H]
\includegraphics[width=0.6\textwidth]{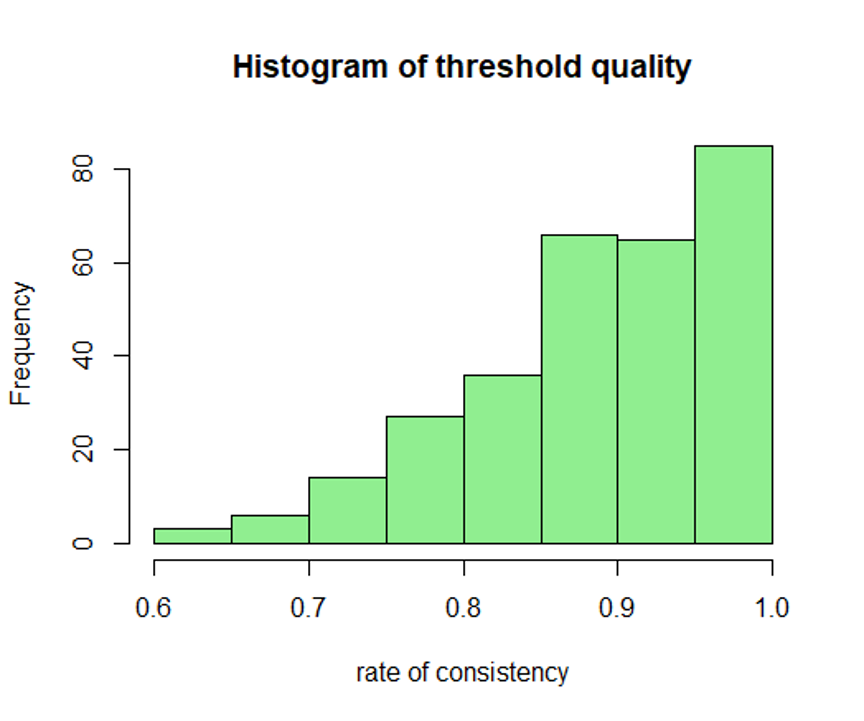}
\centering
\caption{Histogram of the threshold's quality}
\label{fig:threshold_quality}
\end{figure}

Fig.~\ref{fig:threshold} shows the distribution of thresholds we found in each condition and context.
\begin{figure}[H]
    \centering
    \includegraphics[width=0.7\textwidth]{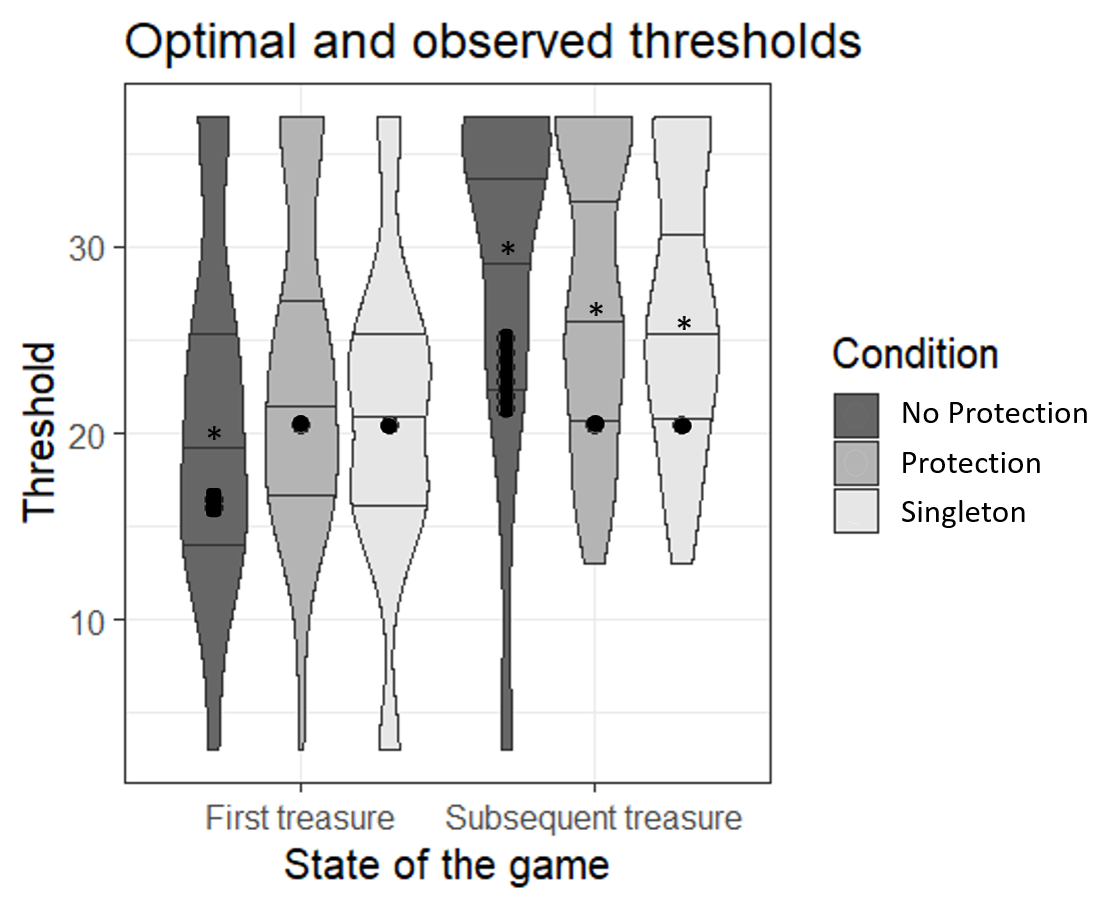}
    \caption{Threshold distribution by condition and by the state of the game. Dots represent optimal/equilibrium thresholds. Stars ($*$) represent significant differences between median value of observed threshold and optimal/equilibrium threshold, determined by the Wilcoxon test. Horizontal lines represent data deviation in quartiles.}
    \label{fig:threshold}
\end{figure}

\section{Protection vs. Singleton}\label{apdx:forgone}
We estimate the effect of the Protection condition on the exploration activity, compared to the Singleton condition, for cases where the players explore for a first treasure. 
Table~\ref{table:otherDiscoveries} column 1 presents the results of this estimation. We can see that under the Protection condition players explore 9 percentage points more than under the Singleton condition, however this effect is not significant. Although the coefficient of the condition is not precisely estimated, this result supports the hypothesis that revealing information about others' treasures encourages innovative activity of other inventors. 

Another way to show this hypothesis is to measure the effect of observing the other players' treasures within the Protection condition. We consider all observations in the Protection condition where players explore for a first treasure, and create a dummy variable, ``$other\_treasure$". 
This variable indicates whether there is another player that found a treasure in the previous round. We estimate the effect of this variable on the exploration activity in the current round. The result can be found in Table~\ref{table:otherDiscoveries} column 2. 
We can see that the players tend to explore 2 percentage points more for a first treasure after another player found treasure ($p<0.01$).
\begin{table}[H]
\begin{center}
\begin{tabular}{l c c }
\hline
 & search & search \\
\hline
(Intercept)                 & $1.12^{***}$  & $1.19^{***}$  \\
                            & $(0.04)$      & $(0.03)$      \\
$cost$                           & $-0.03^{***}$ & $-0.03^{***}$ \\
                            & $(0.00)$      & $(0.00)$      \\
$Protection$              & $0.09$        &               \\
                            & $(0.05)$      &               \\
$other\_treasure$            &               & $0.02^{**}$   \\
                            &               & $(0.01)$      \\
\hline
AIC                         & 13187.09      & 7528.65       \\
BIC                         & 13225.82      & 7571.92       \\
Log Likelihood              & -6588.54      & -3758.32      \\
Num. obs.                   & 17087         & 10015         \\
Num. groups: ID             & 94            & 59            \\
Var: ID (Intercept)         & 0.05          & 0.05          \\
Var: Residual               & 0.12          & 0.12          \\
Num. groups: GroupIndex     &               & 15            \\
Var: GroupIndex (Intercept) &               & 0.01          \\
\hline
\multicolumn{3}{l}{\scriptsize{$^{***}p<0.001$, $^{**}p<0.01$, $^*p<0.05$}}
\end{tabular}
\caption{The effect of observing the other players' treasures. Column 1 compares the Protection and the Singleton conditions and Column 2 estimates the effect of other players' successful search in the previous round, within the Protection condition.}
\label{table:otherDiscoveries}
\end{center}
\end{table}

\end{document}